\newcommand{\rhoid}{\rho^{\textup{ideal}}}
\newcommand{\perr}{\textup{P}_{\neq}}
\newcommand{\Efin}{E_{\mathrm{fin}}}
\newcommand{\mybox}[1]{
    \draw[fill=black,opacity=0.4,draw=white,thick] (-1+#1/2,-1+#1/2) -- (1+#1/2,-1+#1/2) -- (1+#1/2,1+#1/2) -- (-1+#1/2,1+#1/2) -- cycle;
    \draw[fill=black,opacity=0.6,draw=white,thick] (-1-#1/2,-1-#1/2) -- (1-#1/2,-1-#1/2) -- (1+#1/2,-1+#1/2) -- (-1+#1/2,-1+#1/2) -- cycle;
    \draw[fill=black,opacity=0.4,draw=white,thick] (1-#1/2,-1-#1/2) -- (1+#1/2, -1+#1/2) -- (1+#1/2,1+#1/2) -- (1-#1/2,1-#1/2) -- cycle;
    \draw[fill=black,opacity=0.4,draw=white,thick] (-1-#1/2,-1-#1/2) -- (-1+#1/2, -1+#1/2) -- (-1+#1/2,1+#1/2) -- (-1-#1/2,1-#1/2) -- cycle;
    \draw[fill=black,opacity=0.2,draw=white,thick] (-1-#1/2,1-#1/2) -- (1-#1/2,1-#1/2) -- (1+#1/2,1+#1/2) -- (-1+#1/2,1+#1/2) -- cycle;
    \draw[fill=black,opacity=0.4,draw=white,thick] (-1-#1/2,-1-#1/2) -- (1-#1/2,-1-#1/2) -- (1-#1/2,1-#1/2) -- (-1-#1/2,1-#1/2) -- cycle;
}
\tikzset{snake it/.style={decorate,decoration=snake}}
\tikzstyle{limb}=[thick,line cap=round]
\newcommand{\Alice}{
    \def\H{2.2}
    \def\W{5.2}     
    \def\L{3.2}     
    \def\T{0.08}   
    \def\x{-0.1*\W} 
    \def\xA{0}
    \coordinate (H) at (\xA,\H);
    \draw[thick,line cap=round](H)++(-165:0.3) to[out=-140,in=60]++ (-130:0.3)to[out=65,in=-90,looseness=1.0]++ (80:0.45) to[out=90,in=120,looseness=1.4]++ (20:0.2); %
    \draw[thick] (H) circle (0.3);
    \draw[thick,line cap=round] (H)++(-140:0.3) to[out=80,in=-120,looseness=1.8]++ (40:0.6); %
    \draw[thick] (H)++(-90:0.3) coordinate (N) to[out=-95,in=95]++ (0,-0.40*\H) coordinate (P);
    \draw[thick,line cap=round] (N)++(-95:0.03) to[out=-65,in=177]++ (0.25*\H,-0.3*\H) coordinate (RH);
    \draw[thick,line cap=round] (N)++(-95:0.03) to[out=-120,in=90]++ (-0.08*\H,-0.4*\H);
    \draw[thick] (P) to[out=-70,in=95] (\xA+0.08*\H,0);
    \draw[thick] (P) to[out=-100,in=72] (\xA-0.06*\H,0);
}
\newcommand{\Bob}{
    \def\H{2.2}
    \def\W{5.2}     
    \def\L{3.2}     
    \def\T{0.08}    
    \def\x{-0.1*\W} 
    \def\xB{0}
    \draw[thick, xscale=-1] (\xB,\H) circle(0.3) coordinate (H);
    \draw[thick, xscale=-1] (H)++(-90:0.3) coordinate (N) to[out=-92,in=92]++ (0,-0.40*\H) coordinate (P);
    \draw[thick,line cap=round, xscale=-1] (N)++(-95:0.03) to[out=-105,in=90]++ (-0.02*\W,-0.4*\H);
    \draw[thick,line cap=round, xscale=-1] (N)++(-85:0.03) to[out=-80,in=90]++ (0.02*\W,-0.4*\H);
    \draw[thick,line cap=round, xscale=-1] (P) to[out=-92,in=82] (\xB-0.02*\W,\T);
    \draw[thick,line cap=round, xscale=-1] (P) to[out=-80,in=90] (\xB+0.02*\W,\T);
}
\newcommand{\Eve}{
    \def\H{3.0}    
    \def\h{2.2}   
    \def\w{0.36}    
    \def\r{0.3}   

    \draw[thick] (0.1*\h,\h) circle (\r) coordinate (H);
    \draw[thick] 
      (H)++(-110:\r) coordinate (N) --++ (-100:0.02*\h) coordinate (SH)
      to[out=-85,in=85]++ (0,-0.38*\h) coordinate (P);
    \draw[thick,line cap=round] (SH)++(-85:0.015*\h) to[out=-115,in=60,looseness=1.8]++ (-0.17*\h,-0.31*\h); 
    \draw[thick,line cap=round] (SH)++(-85:0.015*\h) to[out=-60,in=90]++ (0.5*\w,-0.4*\h); 
    \draw[thick] (P) to[out=-110,in=85] (-0.5*\w,0);
    \draw[thick] (P) to[out=-80,in=108] ( 0.5*\w,0);
    \fill[rotate=-20] 
      (H)++(170:\r) to[out=-90,in=-90,looseness=1.9]++ (0:0.6*\r) --++ (0:0.02*\r)
      to[out=-90,in=-90,looseness=1.6]++ (0:0.7*\r) --++ (90:0.04*\r) --++ (180:1.32*\r) -- cycle;
    \draw[thick,line cap=round](H)++(0:\r) to [out = 260, in= 110,looseness=1.0]++ (-80:1.2*\r) to [out = 200, in= 10,looseness=1.0]++ (180:2.5*\r) to [out = 50, in= 290,looseness=1.0]++ (70:0.275); 
    \draw[thick,line cap=round, xscale=-1] (H)++(180:0.3) to[out=10,in=180,looseness=0.5]++ (30:0.53);
    \draw[thick] (H)++(130:\r) --++ (70:0.8*\r) --++ (300:0.55*\r);
    \draw[thick] (H)++(70:\r) --++ (50:0.6*\r) --++ (280:\r);

}
\definecolor{LightGray}{RGB}{220,220,220}
\definecolor{myred}{RGB}{255, 19, 0}
\definecolor{myblue}{RGB}{14, 81, 167}
\definecolor{myorange}{RGB}{255, 129, 0}
\definecolor{mygreen}{RGB}{0, 146, 44}
\newcommand{\trdist}[2]{\left\| #1 - #2 \right\|_{1}}
\crefname{definition}{Definition}{Definitions}
\Crefname{definition}{Definition}{Definitions}
\crefname{theorem}{Theorem}{Theorems}
\Crefname{theorem}{Theorem}{Theorems}
\crefname{claim}{Claim}{Claims}
\Crefname{claim}{Claim}{Claims}
\crefname{lemma}{Lemma}{Lemmas}
\Crefname{lemma}{Lemma}{Lemmas}
\crefname{appendixlemma}{Lemma}{Lemmas}
\Crefname{lemma}{Lemma}{Lemmas}
\crefname{rem}{Remark}{Remarks}
\Crefname{rem}{Remark}{Remarks}
\crefname{prop}{Proposition}{Propositions}
\Crefname{prop}{Proposition}{Propositions}
\crefname{cor}{Corollary}{Corollaries}
\Crefname{cor}{Corollary}{Corollaries}
\crefname{section}{Section}{Sections}
\Crefname{section}{Section}{Sections}
\crefname{equation}{}{}
\Crefname{equation}{}{}
\crefname{figure}{Figure}{Figures}
\Crefname{figure}{Figure}{Figures}
\crefname{appendix}{Appendix}{Appendices}
\Crefname{appendix}{Appendix}{Appendices}
\crefname{table}{Table}{Tables}
\Crefname{table}{Table}{Tables}
\crefname{exmp}{Example}{Examples}
\Crefname{exmp}{Example}{Examples}
\crefname{footnote}{Footnote}{Footnote}
\Crefname{footnote}{Footnote}{Footnote}
\newtheorem{definition}{Definition}
\newtheorem{lemma}{Lemma}
\newtheorem{appendixlemma}{Lemma}[section]
\newtheorem*{remark}{Remark}
\newcommand\blfootnote[1]{%
  \begingroup
  \renewcommand\thefootnote{}\footnote{#1}%
  \addtocounter{footnote}{-1}%
  \endgroup
}
\title{\LARGE Defining Security in Quantum Key Distribution}
\author[1]{\large Carla Ferradini}
\author[1]{\large Martin Sandfuchs}
\author[2]{\large Ramona Wolf}
\author[1]{\large Renato Renner}
\affil[1]{Institute for Theoretical Physics, ETH Zurich, Zurich, Switzerland}
\affil[2]{Naturwissenschaftlich-Technische Fakultät, Universität Siegen, Siegen, Germany}
\date{\normalsize \today}
\date{\vspace{-5ex}}
\begin{document}

\maketitle

\begin{abstract} 
\blfootnote{Contacts: \href{mailto:cferradini@phys.ethz.ch}{cferradini@phys.ethz.ch}, \href{mailto:martisan@phys.ethz.ch}{martisan@phys.ethz.ch}, \href{mailto:ramona.wolf@uni-siegen.de}{ramona.wolf@uni-siegen.de}, and \href{mailto:renner@ethz.ch}{renner@ethz.ch}}
The security of quantum key distribution (QKD) is quantified by a parameter $\varepsilon>0$, which---under well-defined physical assumptions---can be bounded explicitly. This contrasts with computationally secure schemes, where security claims are only asymptotic (i.e., under standard complexity assumptions, one only knows that $\varepsilon \to 0$ as the key size grows, but has no explicit bound). Here we explain the definition and interpretation of $\varepsilon$-security. Adopting an axiomatic approach, we show that $\varepsilon$ can be understood as the maximum probability of a security failure. Finally, we review and address several criticisms of this definition that have appeared in the literature.
\end{abstract}

\section{Introduction}

Consider the following common cryptographic scenario: two parties, traditionally named Alice and Bob, wish to communicate over a channel that can be observed by an adversary, Eve (short for ``eavesdropper''). Alice’s objective is to transmit a secret message to Bob in such a way that Eve learns virtually nothing about its contents. This objective is achieved using encryption schemes.
Proving the security of such a scheme is highly non-trivial, since it requires accounting for every possible attack strategy Eve may run---of which there are infinitely many. Consequently, security cannot be demonstrated by experimental tests alone, not even in principle. This stands in stark contrast to the typical approach in physics, where experiments serve to validate theoretical models. Security, instead, must be established through theoretical analysis based on assumptions about the adversary. 

For most classical encryption schemes in use today, these assumptions concern the adversary's computing power. This is why the resulting security is termed \emph{computational}. For example, one assumes that the adversary cannot factor integers in polynomial time~\cite{RSA_1978}. However, such computational assumptions remain unproven and, in some cases, have turned out to be unjustified~\cite{Shor_1997}. Furthermore, even if they were valid, the resulting guarantees are only asymptotic. At best, one can state that ``the probability that an adversary learns something about the secret message (using a fixed amount of computational resources) decreases rapidly with increasing key length.'' What is missing are explicit quantitative bounds on that probability. Consequently, choosing key lengths to achieve a desired security level is largely a matter of educated guesswork (see Appendix~A of~\cite{RSA_1978} for examples of such estimates).

This issue is entirely avoided by \emph{information-theoretically secure} cryptosystems like those based on quantum key distribution (which this document is mainly about) and one-time-pad encryption. In contrast to schemes that rely on computational assumptions, these provide explicit, quantitative guarantees, such as ``the probability that an adversary learns something about the secret message (using arbitrary computational resources) is at most~$\varepsilon$.'' Here, $\varepsilon$ is a concrete security parameter, e.g., $\varepsilon = 10^{-10}$. Fig.~\ref{fig:quantcost} shows the key difference to computational security: if a message is encrypted with a classical protocol like RSA, its security degrades as the computational resources available to adversaries grow with time.

\begin{figure}[t]
	\centering
	\begin{tikzpicture}[scale=1.1]
		\draw[thick,->,>=stealth] (0,0) -- (9.25,0);
		\draw[thick,->,>=stealth] (0,0) -- (0,4);
		\node at (9.65,0.05) {\footnotesize time};
        \node at (9.65,-0.2) {\footnotesize (on scale of years)};
		\node at (0,4.7) {\footnotesize probability of};
		\node at (0,4.45) {\footnotesize security breach};
  		\node at (0,4.2) {\footnotesize (logarithmic scale)};      
		\draw[thick,color=mygreen] (0,0.5) -- (9,0.5);
		\draw[thick, color=myred] (0,0.25) -- (0,0.25) -- (1.5,0.5) -- (1.5,0.75) -- (3,1) -- (3,1.3) -- (4,1.5) -- (4,3.5) -- (9,3.5);
		\draw[thick, color=myblue] (0,0.25) -- (0,0.25) -- (1.25,0.3) -- (1.25,0.5) -- (2.5,0.75) -- (2.5,1.25) -- (3.75,1.5) -- (3.75,1.8) -- (5,2) -- (5,2.3) -- (6.5,2.5) -- (6.5,2.7) -- (7.25,2.9) -- (7.25,3.15) -- (9,3.45);
		\node[color=myblue] at (8.5,3) {\footnotesize post-quantum};
		\node[color=myblue] at (8.5,2.75) {\footnotesize protocol};
		\node[color=mygreen] at (8,0.75) {\footnotesize QKD\,+\,OTP protocol};
		\node[color=myred] at (8,4) {\footnotesize RSA};
		\node[color=myred] at (8,3.75) {\footnotesize protocol};
		\draw[color=black,dashed] (4,-0.15) -- (4,1.5);
		\node[color=black] at (4,-0.4) {\footnotesize first universal};
		\node[color=black] at (4,-0.65) {\footnotesize quantum computer};
		\node[color=black] at (0.8,1.5) {\footnotesize algorithmic};
		\node[color=black] at (0.8,1.25) {\footnotesize discovery};
		\draw[->,>=stealth] (1.4,1.2) to[bend left] (1.5,0.85);
		\draw[->,>=stealth] (1.55,1.4) to[bend left] (2.45,1.3);
		\node at (2,2.3) {\footnotesize evolution of};
		\node at (2,2.05) {\footnotesize hardware};
		\draw[->,>=stealth] (2.65,2) to[bend left] (3.25,1.55);
		\draw[->,>=stealth] (2.8,2.25) to[bend left] (4.25,2);
	\end{tikzpicture}
	\caption{\label{fig:quantcost}\textbf{Information-theoretic versus computational security.} The schematic plot shows the probability that a message encrypted today can be read by an adversary at a future time. With information-theoretically secure schemes, such as quantum key distribution (QKD)  combined with one-time-pad (OTP) encryption, this probability remains a fixed value~$\varepsilon$. By contrast, for computationally secure protocols such as RSA, the probability of a breach increases over time as computational resources advance, and may eventually reach~$1$, e.g., when efficient factoring becomes possible. \emph{(Reproduced from~\cite{RennerWolf2023} with permission from ETH Zurich.)}}
\end{figure}
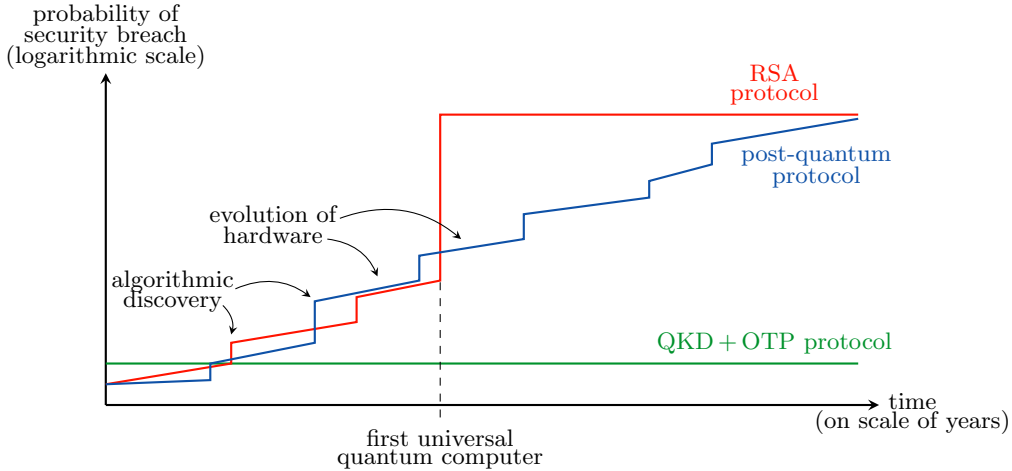

A well-known example of an information-theoretically secure encryption scheme is the \emph{one-time pad (OTP)} protocol~\cite{Shannon_1949}. While the OTP is resource-intensive---requiring a pre-shared secret key of length $n$ to securely transmit an $n$-bit message---it achieves perfect secrecy, i.e., $\varepsilon_{\mathrm{OTP}} = 0$. This means that the ciphertext reveals no information about the message to the adversary, provided that the pre-shared key was perfectly secret.
The OTP thus reduces the problem of securely transmitting an $n$-bit message from Alice to Bob to that of distributing a secret key of $n$ bits to them. This can be achieved through quantum key distribution (QKD)~\cite{BB_1984,Ekert_1991}: starting from a shared password (see Footnote~\ref{ftn_password}) and communicating over a potentially insecure quantum channel, Alice and Bob can generate a continuous stream of fresh secret key bits. Crucially, under suitable assumptions (detailed in \cref{sec_assumptions}), these key bits are information-theoretically secure, with a security parameter $\varepsilon_{\mathrm{QKD}} > 0$ that can be made arbitrarily small.

One might hope to prove that a QKD scheme generates a perfectly secure key, corresponding to $\varepsilon_{\mathrm{QKD}} = 0$. In practice, however, it is impossible to rule out the possibility that an attack succeeds with negligible (but non-zero) probability. For instance, imagine Eve measures each quantum signal sent from Alice to Bob and, by sheer luck, always chooses a basis that avoids introducing detectable disturbances. Alice and Bob would then proceed with the protocol and generate a key that is fully known to Eve. We must therefore accept a non-zero value of $\varepsilon_{\mathrm{QKD}}$. Nonetheless, any arbitrarily small $\varepsilon_{\mathrm{QKD}} > 0$ can be achieved by suitably choosing the protocol parameters. In typical QKD protocols, the required resources (such as the number of signals exchanged between Alice and Bob) scale only polylogarithmically with $1/\varepsilon_{\mathrm{QKD}}$.

The aim of this document is to formalise and explain the operational meaning of the security parameter~$\varepsilon$. Our guiding principle will be to interpret it as the \emph{(maximum) probability of a security failure}. For example, saying that a QKD protocol is $\varepsilon_{\mathrm{QKD}}$-secure means that---regardless of the attack strategy---no information about the key leaks to the adversary except with probability~$\varepsilon_{\mathrm{QKD}}$. 

Crucially, probabilities satisfy the union bound: the probability that a system with multiple components fails is no greater than the sum of the failure probabilities of its individual components. Interpreting $\varepsilon$ as a probability therefore naturally enables \emph{composability}: the security parameter of a composite system is upper-bounded by the sum of the security parameters of its parts.
A typical example is a secure message transmission scheme, constructed by composing QKD with OTP encryption, as described above. Since OTP encryption offers perfect secrecy ($\varepsilon_{\mathrm{OTP}} = 0$) and the QKD protocol is assumed to be $\varepsilon_{\mathrm{QKD}}$-secure, the overall security parameter of the system is simply $\varepsilon_{\mathrm{total}} = \varepsilon_{\mathrm{OTP}} + \varepsilon_{\mathrm{QKD}} = \varepsilon_{\mathrm{QKD}}$. This means that the adversary’s probability of gaining information about the message by attacking the scheme is bounded by this value.

This brings us to the central question regarding the security of a QKD scheme:
\begin{itemize}
    \item[\textbf{Q}] What does it mean that a key is secure except with probability~$\varepsilon_{\mathrm{QKD}}$?
\end{itemize}
We will address this question in detail in~\cref{sec:security-definition}. However, to formulate a meaningful answer, we must first establish a framework to model the knowledge an adversary might acquire about the key. This leads to the following preliminary question:
\begin{itemize}
\item[\textbf{Q1}] How to model the adversary’s knowledge?
\end{itemize}
Any security claim depends on specific assumptions---for instance, that quantum theory is correct and that Eve cannot access Alice’s and Bob’s laboratories. This raises the question:
\begin{itemize}
\item[\textbf{Q2}] What are the assumptions underlying security?
\end{itemize}
Answering these two preliminary questions will be the content of \cref{sec:assumptions}.

\section{Preliminaries and notation}
\Cref{tab:notation} summarizes some basic notions of quantum information theory as well as the notation used throughout the remainder of this document. For a detailed introduction to the formalism of quantum information theory, see, for instance, \cite{Nielsen_2010}. 

\vspace{-0.1cm}
\renewcommand{\arraystretch}{1.4}
\begin{longtable}[H]{|c||p{12cm}|}
        \hline
        \textbf{Notation} & \textbf{Description} \\
        \hline
        \hline
        $\mathcal{H}$ & Hilbert space\\
        \hline
        $\mathcal{H}_A, \mathcal{H}_B$, ... & Hilbert spaces belonging to different systems, $A$, $B$, ... \\
        \hline     
        $\mathcal{H}_X$ & Hilbert space belonging to a system $X$, equipped with a fixed orthonormal basis, $\{\ket{x}\}_{x \in \mathcal{X}}$, referred to as the computational basis \\
        \hline
        $S \geq 0$ & $S\in\mathrm{End}(\mathcal{H})$ is positive semi-definite \\
        \hline
        $S \leq T$ & $T - S \geq 0$, i.e., $T - S$ is positive semi-definite \\
        \hline
        $\mathcal{S}(\mathcal{H})$ & Set of normalised density operators (also referred to as quantum states of just states) on $\mathcal{H}$: $\mathcal{S}(\mathcal{H}) = \{{\rho\in \mathrm{End}(\mathcal{H})} : \tr(\rho) = 1, \rho \geq 0\}$ \\
        \hline
        $\rho_{AB}$ & Density operator acting jointly on systems $A$ and $B$: $\rho_{AB}\in\mathcal{S}(\mathcal{H}_A \otimes \mathcal{H}_B)$ \\
        \hline
        $\rho_A$ & Reduced density operator on $A$ obtained by tracing out $B$: $\rho_A = \tr_B[\rho_{AB}]$ \\

        \hline
        $\rho_X$ & Classical state on $\mathcal{H}_X$, describing a random variable $X$ with alphabet~$\mathcal{X}$: $\rho_X=\sum_{x \in \mathcal{X}} P_X(x) \ketbra{x}_X$, for a fixed computational basis $\{\ket{x}\}_{x}$ of $\mathcal{H}_X$  \\
        \hline
        $\rho_{XA}$ & Classical-quantum state describing a random variable $X$ correlated with a quantum system $A$: $\rho_{XA}=\sum_{x \in \mathcal{X}} P_X(x) \ketbra{x}_X \otimes \rho_{A|x}$, with $\rho_{A|x}$ the state of $A$ conditioned on $X=x$ \\
        \hline
        $\mathrm{Pr}[\Omega]$ & Probability of an event $\Omega\subseteq \mathcal{X}$: $\mathrm{Pr}[\Omega]=\sum_{x \in \Omega} P_X(x)$ \\
        \hline
        $\mathrm{Pr}_\rho[\Omega]$ & Same as $\mathrm{Pr}[\Omega]$ but with an explicit reference to the underlying distribution, i.e.,  $\mathrm{Pr}_\rho[\Omega] = \sum_{x\in\Omega} P_X(x)$ for some classical state $\rho_{X}$ as above \\
        \hline
        $\rho_{XA|\Omega}$ & State conditioned on $\Omega \subseteq \mathcal{X}$: $\rho_{XA|\Omega}=\frac{1}{\mathrm{Pr}[\Omega]} \sum_{x \in \Omega} P_X(x) \ketbra{x}_X \otimes \rho_{A|x}$ (Note that $\rho_{XA|\Omega}$ is only well-defined if $\mathrm{Pr}[\Omega] > 0$. This is unproblematic if $\rho_{XA|\Omega}$ is accompanied by a compensating factor of $\mathrm{Pr}[\Omega]$.) \\
        \hline
        $\rho_{XA \land \Omega}$ & Subnormalized conditional state given by $\rho_{XA \land \Omega} = \mathrm{Pr}[\Omega]\,\rho_{XA | \Omega}$ \\
        \hline
        $\mathcal{E}_{B|A}$ & Completely positive and trace preserving (CPTP) map or channel $\mathcal{E}_{B|A}\in\mathrm{Hom}\left(\mathrm{End}(\mathcal{H}_A),\mathrm{End}(\mathcal{H}_B)\right)$ \\
        \hline
        $\mathcal{I}_{R}$ & Identity channel on the system $R$ \\
        \hline
        $\mathcal{E}_{B|A}[\rho_{AR}]$ & Application of a channel to a state of a larger system: $(\mathcal{E}_{B|A} \otimes \mathcal{I}_R)[\rho_{AR}]$\\
        \hline
        POVM & A positive operator-valued measure, i.e., a set of positive semi-definite operators $\{M(x)\}_x$ such that $\sum_x M(x) = \mathds{1}$ \\
        \hline
        $\|S\|_1$ & Schatten $1$-norm of $S$, given by $\|S\|_1 = \tr[\sqrt{S^* S}]$ \\
        \hline
        $\frac{1}{2} \| \rho - \sigma \|_1$ & Trace distance between states $\rho$ and $\sigma$, defined by the Schatten $1$-norm \\
        \hline
    \caption{
    \textbf{Summary of notation.} Subscripts in capital letters refer to systems. For example, $\mathcal{H}_A$ denotes the Hilbert space associated with system $A$, and $\rho_A$ the quantum state assigned to it. We use $A, B, \ldots$ for generic quantum systems, while $X, Y, Z$ refer to classical systems, i.e., systems whose states are diagonal in a fixed computational basis.}
    \label{tab:notation}
\end{longtable}

\section{Setup and assumptions}
\label{sec:assumptions}

In this section, we introduce the conceptual framework within which security statements and proofs for QKD are phrased. 

\subsection{Modelling knowledge} \label{sec_knowledge}

We start by addressing the first preliminary question:
\begin{itemize}
    \item[\textbf{Q1}] How to model the adversary’s knowledge?
\end{itemize}
To answer this question, we proceed in two steps, corresponding to two subquestions. The first one is: 
\begin{itemize}
    \item[\textbf{Q1a}] How to model knowledge within quantum theory?
\end{itemize}
In (classical) information theory, knowledge is modelled using probabilities, i.e., we assign to every possible event a probability that represents how certain an agent is of the event happening.  This is also sometimes referred to as the Bayesian approach.
However, when describing knowledge about general quantum systems, probability distributions are not suitable\footnote{In principle, it is possible to represent knowledge about general quantum systems in terms of probability distributions instead of density operators, but this representation would not reflect the locality of subsystems~\cite{Bell_1964}.} and must be replaced by quantum states, which we will represent as density operators~\cite{Nielsen_2010}.\footnote{Using density operators to model knowledge is conceptually independent of any particular philosophical stance on quantum theory. In particular, the approach is fully compatible with realist interpretations---such as the many-worlds interpretation or Bohmian mechanics---which posit the existence of an objective ``quantum state of the universe.'' Within such frameworks, an agent’s knowledge of subsystems can still be captured by effective or conditional quantum states, which correspond to the density operators considered here.}  

Knowledge is inherently subjective, so it must always be specified relative to the agent holding it. Moreover, since an agent’s knowledge typically changes over time, it is equally important to specify the point in time at which the knowledge is defined.
Consider, for example, Alice and Bob describing a qubit. Initially, both have minimal information and assign the maximally mixed state to the qubit. Suppose Alice subsequently learns the outcome of a non-destructive measurement that was previously performed on the qubit. As a result, she updates her description and assigns a pure state, while Bob---still unaware of the measurement outcome---continues to describe the qubit using the maximally mixed state. Thus, Alice now associates a different state to the same system, reflecting her updated knowledge about it.

This subjective nature of knowledge raises another subquestion of \textbf{Q1}:
\begin{itemize}
    \item[\textbf{Q1b}] How should we deal with the subjective nature of knowledge?
\end{itemize}

It may be tempting to define security of a QKD protocol directly in terms of the density operator representing Eve’s knowledge about the generated key. However, this approach is problematic for at least two reasons. (1)~Density operators describe the knowledge a \emph{classical} agent has about a quantum system. Yet, an adversary may use quantum technologies to store and process information. Assuming Eve's knowledge is purely classical is therefore unjustified.\footnote{Indeed, restricting the adversary to classical knowledge leads to a significantly weaker notion of security than the standard security definition considered here~\cite{Konig_2007}.}
(2)~The goal of a security definition is to provide a guarantee to the honest parties---not the adversary---that the key they will produce by running a QKD protocol is secure. Having this guarantee thus corresponds to knowledge held by the honest parties.

Both concerns are addressed by considering a joint density operator of the form $\rho_{K E}$ that characterises Alice’s (or Bob’s) knowledge---prior to executing the QKD protocol---of the key~$K$ to be generated, along with all information~$E$ that may become accessible to Eve during the protocol. If, according to $\rho_{K E}$, Eve’s information~$E$ is uncorrelated with the key~$K$ they will generate, then Alice (or Bob) can be assured that~$K$ will be secure.

\begin{remark} \label{rem_stateterminology}
In quantum information theory, it is common to make statements such as ``Alice produces a state $\rho_S$''. Within the Bayesian approach adopted here, this phrasing can be misleading, as quantum states represent an agent’s knowledge rather than physical entities that a device can produce. Nonetheless, given the ubiquity of this terminology, we will also use it as a shorthand for: ``Alice prepares a system $S$ such that her knowledge of $S$ is described by~$\rho_S$.'' 
\end{remark}

\subsection{Assumptions} \label{sec_assumptions}

Claiming that a key generated by a QKD protocol is secure is ultimately a statement about the adversary’s (lack of) knowledge. However, this knowledge is not directly accessible through experiments. As a result, security cannot be established empirically. Instead, we must rely on security proofs, which ensure that the key is secure---provided certain assumptions hold. This leads to our second preliminary question:
\begin{itemize}
\item[\textbf{Q2}] What are the assumptions underlying security?
\end{itemize}
The assumptions refer to a typical QKD setup as illustrated in Fig.~\ref{fig_QKDsetup}. We begin by stating them informally, and then present a mathematical formulation.

\begin{enumerate}
    \item \emph{Quantum theory.} Alice, Bob, and Eve's actions are described correctly and completely by quantum theory.\footnote{\emph{Completeness of quantum theory} refers to the property that no agent can access any information other than what is modelled by quantum theory. Crucially, under the other assumptions that are usually made in quantum cryptography, notably the correctness of quantum theory and the existence of randomness within relativistic spacetime, completeness of quantum theory is implied~\cite{Colbeck_2011}; see also Section~A.1 of~\cite{Portmann_2022}.}
    \item \emph{Classical authenticated communication.}  Alice and Bob are connected via a two-way classical authenticated communication channel. That is, Eve cannot forge their messages without being detected.\footnote{\label{ftn_password}There exist protocols for achieving information-theoretically secure authentication using a short pre-shared key~\cite{Wegman_1981}. This key may be only weakly secret (such as a password generated from low-quality randomness~\cite{RennerWolf2003}) and contain errors~\cite{RennerWolf2004}. These authentication protocols can be composed with QKD protocols, as shown in~\cite{Portmann_2014_key_recycling}.} However, she may still eavesdrop on or block the communication.
    \item \emph{Quantum communication.} Alice and Bob are connected through a quantum channel, whose dynamics are described by a specified noise model when Eve acts only passively. (Nothing is assumed in the case where Eve is active. This assumption is thus only required to ensure that the QKD protocol completes successfully when Eve is passive, but it is not needed for the bare security claim; see \cref{sec:security-definition}.) 
    \item \emph{Trusted devices.} The local devices used by Alice and Bob reliably perform the tasks as specified by the protocol. In particular:
    \begin{enumerate}
        \item Random number generators (RNGs) produce random values according to the distribution specified by the protocol. These values are uncorrelated with any information held by Eve.
        \item Signal sources prepare the quantum states to be sent over the quantum communication channel according to the protocol prescription.
        \item Measurement devices apply the measurements to the signals received from the quantum communication channel according to the protocol prescription.
        \item Classical computing devices perform the computations specified by the protocol.
    \end{enumerate}
    \item \emph{Laboratory isolation.} Eve has no access to Alice's or Bob's laboratories. In particular, no information other than what is deliberately sent through the classical or quantum communication channels leaves the laboratories or becomes accessible to Eve.\footnote{This assumption is usually dropped in the \emph{real/ideal paradigm} (see, e.g., \cite{Portmann_2022}), where properties of the laboratory are not considered part of the security definition. According to this paradigm, one only requires that executing a QKD protocol (the ``real world'' scenario) be at least as secure as invoking a hypothetical machine (the ``ideal world'' scenario) that delivers perfect keys to Alice and Bob. Nothing is said about their laboratories; in particular, if they are not well isolated, the keys can leak both in the real and the ideal world.}
\end{enumerate}

\begin{figure}[t]
\centering
\begin{tikzpicture}[scale=0.92,
   mybaseline/.style={line width=1.15pt},
    mydashedline/.style={line width=1.pt, dashed},
]   
    \begin{scope}[scale=1.2,shift={(-4.5, 1)}]
        \begin{scope}[yscale=1*1.2, xscale=1.7*1.2]
            \mybox{0.5};
        \end{scope}
        \begin{scope}[shift={(-1.75, -1.2)},scale = 0.6,color=white]
            \Alice
        \end{scope}
        \begin{scope}[shift={(-1,-0.25)}]
            \begin{scope}[yscale=0.6*0.4, xscale=1*0.4, shift={(0,-0.2)}]
                \mybox{0.5};
            \end{scope}
            \node[color=white] at (-0.075,-0.075) {RNG};
        \end{scope}
        \begin{scope}[shift={(0,-0.45)}]
           \node at (0.6,-.4) {\includegraphics[width=0.1\textwidth]{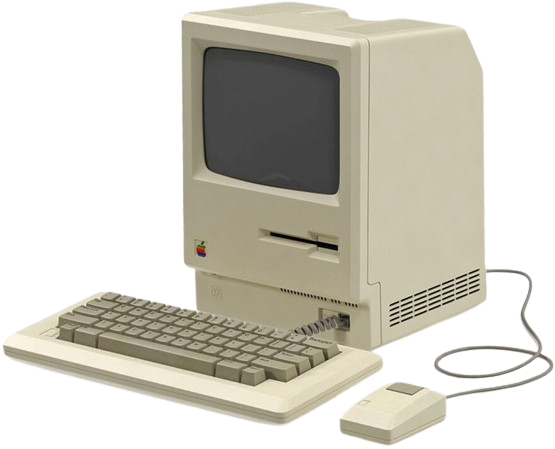}};
        \end{scope}
        \begin{scope}[shift={(0.65,0.4)}]
            \begin{scope}[yscale=0.6*0.4, xscale=1.6*0.4, shift={(0,-0.2)},rounded corners=0.1cm]
                \mybox{0.5};
            \end{scope}
            \node[color=white] at (-0.12,-0.1) {source};
        \end{scope}
        \draw[mybaseline, ->,color=white] (-1,-0.05) to [in=180,out=90] (-.2,0.4);
        \draw[mybaseline, ->,color=white] (-1,0.05-0.65) to [in=180,out=270] (-.2,-0.4-0.65);
        \begin{scope}[shift={(-1.5, -1.2)},transform shape,scale=0.45]
                    \draw[fill=myblue!25,draw=none,rounded corners=0.1cm] (-1.6, 3.7) -- (1.6, 3.7) -- (1.6, 4.4) -- (-1.6, 4.4) -- cycle;
                    \node[color=black] (Bits1) at (0, 4.05) {0 1 1 1 0 1 0 $\ldots$};
        \end{scope}
    \end{scope}
    \draw[thick,<->] (2.25, 0.5) -- (-2.25, 0.5);
    \node at (0,0.75) {\small cl.~authenticated channel};
    \draw[thick,snake it] (2.25, 1.65) -- (-2.25, 1.65);
    \node at (0,2.) {\small quantum channel};
    \begin{scope}[scale=1.2,shift={(4.5,1)}]
        \begin{scope}[yscale=1*1.2, xscale=1.7*1.2]
            \mybox{0.5};
        \end{scope} 
        \begin{scope}[shift={(0.2,-0.25)}]
            \begin{scope}[yscale=0.6*0.4, xscale=1*0.4, shift={(0,-0.2)}]
                \mybox{0.5};
            \end{scope}
            \node[color=white] at (-0.05,-0.05) {RNG};
        \end{scope}
        \begin{scope}[shift={(-1.2,0.4)}]
            \begin{scope}[yscale=0.6*0.4, xscale=1.6*0.4, shift={(0,-0.2)},rounded corners=0.1cm]
                \mybox{0.5};
            \end{scope}
            \node[color=white] at (-0.12,-0.1) {meas.};
        \end{scope}
        \begin{scope}[xscale = -1, shift ={(0.8,0)}]
            \draw[mybaseline, ->,color=white] (-1,-0.05) to [in=180,out=90] (-.2,0.4);
            \draw[mybaseline, ->,color=white] (-1,0.05-0.65) to [in=180,out=270] (-.2,-0.4-0.65);
        \end{scope} 
        \begin{scope}[shift={(0,-0.45)}]
           \node at (-1.2,-.4) {\scalebox{-1}[1]{\includegraphics[width=0.1\textwidth,]{Old_laptop.png}}};
        \end{scope}
        \begin{scope}[shift={(1.2, -1.2)},scale = 0.6,color=white]
            \Bob
        \end{scope}
        \begin{scope}[shift={(0.6, -1.2)},transform shape,scale=0.45]
            \draw[fill=myblue!25,draw=none,rounded corners=0.1cm] (-1.6, 3.7) -- (1.6, 3.7) -- (1.6, 4.4) -- (-1.6, 4.4) -- cycle;
            \node[color=black] (Bits1) at (0, 4.05) {0 1 1 1 0 1 0 $\ldots$};
        \end{scope}
    \end{scope}
        \begin{scope}[shift={(0, -2.5)},scale = 0.8,transform shape]
            \Eve
        \end{scope}
\end{tikzpicture}
\caption{\textbf{Setup considered in QKD.} Alice and Bob can communicate over a classical authenticated channel and an insecure quantum channel. The adversary Eve has access to all information outside Alice's and Bob's laboratories. Alice's and Bob's goal is to generate a key, i.e., a string of uniformly random bits known to them but not to Eve.}
\label{fig_QKDsetup}
\end{figure}
Some of these assumptions may seem overly strong. For example, it is practically impossible to produce random numbers that are perfectly uniformly distributed and uncorrelated to the information accessible to an adversary~\cite{FRT2013}. In fact, even the most carefully built RNG might slightly deviate from this ideal. Fortunately, such small deviations do not compromise the security of the resulting QKD scheme. Instead, the probability of a deviation can be accounted for by adding it to the overall security parameter. This is a consequence of \emph{composability}, which we will discuss in \cref{sec:security-definition}.

We now formalise the assumptions using the framework of quantum information theory. A QKD protocol can be understood as a sequence of actions to be carried out by Alice and Bob, as illustrated in Fig.~\ref{fig_QKD_protocol}. Similarly, an attack against the protocol corresponds to a sequence of actions by Eve. The following assumptions specify the mathematical representation of these actions.

\begin{enumerate}
  \item \emph{Quantum theory.} The information held by Alice, Bob, and Eve is described correctly and completely within the standard formalism of quantum information theory~\cite{Nielsen_2010}. In particular, all information is encoded in designated subsystems, each of which is associated with one of the agents (Alice, Bob, or Eve).
  Each action by an agent corresponds to a completely positive trace-preserving (CPTP) map acting on these systems.
  \item \emph{Classical authenticated communication.} Each use of the classical channel is described by a CPTP map
  $\smash{\mathcal{E}^{\mathrm{auth}}_{YE'|XE}}$ where $X$ is a classical input system associated with Alice (or Bob), and $Y$ is a classical output system associated with Bob (or Alice). The (possibly quantum) systems $E$ and $E'$ are associated with Eve. The map $\smash{\mathcal{E}^{\mathrm{auth}}_{YE'|XE}}$ may be chosen by Eve under the constraint that the value of $Y$ must either be equal to that of $X$, or take a special symbol indicating that the message was blocked. 
  \item \emph{Quantum communication.} Each use of the quantum channel is modelled by a CPTP map $\smash{\mathcal{E}^{\mathrm{quant}}_{BE'|AE}}$ where $A$ and $B$ are quantum systems associated to Alice and Bob, respectively (or by Bob and Alice, respectively, when communication occurs in the opposite direction) and where $E$ and $E'$ are associated to Eve. The map $\smash{\mathcal{E}^{\mathrm{quant}}_{BE'|AE}}$ can be chosen arbitrarily by Eve, but one particular choice corresponding to a fixed noise model is referred to as a \emph{passive} adversary.
  \item \emph{Trusted devices.} Each local action carried out by Alice or Bob is described by a CPTP map as specified by the protocol. 
  \item \emph{Laboratory isolation.} Each action executed by Eve is described by a CPTP map that acts as the identity on the systems associated to Alice and Bob.
\end{enumerate}
\begin{figure}[t]
\centering
\begin{tikzpicture}[
    green_box/.style={
        thick,
        text=black,
        draw=mygreen!90,
        rounded corners,
        minimum height=0.7cm,
        minimum width=0.9cm,
        fill=mygreen!60,
    },
    blue_box/.style={
        thick,
        text=black,
        draw=myblue!70,
        rounded corners,
        minimum height=0.7cm,
        minimum width=0.9cm,
        fill=myblue!50,
    },
]
    \node[] (Alice) at (-3, 0) {Alice};
    
    \node[] (Eve) at (0, 0) {Eve};

    \node[draw=myorange!70,fill=myorange!40,rounded corners=0.5cm,minimum width=7.5cm,minimum height=1cm] (rho) at (0, -0.8) {$\rho_{ABE}$};

    \node[] (Bob) at (3, 0) {Bob}; 
    \node[blue_box] (M) at ([yshift=-1.9cm] Alice.south) {$\mathcal{M}$};
    \draw[thick,->,>=stealth] (rho.south-|M.north) -- (M.north);
    \node[green_box,minimum width=1.4cm] (Eq) at ([yshift=-1.9cm] Eve.south) {$\mathcal{E}^{\mathrm{quant}}$};
    \draw[thick,->,>=stealth] (rho.south-|Eq.north) -- (Eq.north);
    \draw[thick,->,>=stealth,decorate,decoration={snake,amplitude=0.3mm,segment length=2.3mm}] (M.east) -- (Eq.west);

    \node[blue_box] (N) at ([yshift=-1.9cm] Bob.south) {$\mathcal{N}$};
    \draw[thick,->,>=stealth] (rho.south-|N.north) -- (N.north);
    \draw[thick,->,>=stealth,decorate,decoration={snake,amplitude=0.3mm,segment length=2.3mm}] (Eq.east) -- (N.west);

    \begin{scope}[on background layer]
        \draw[rounded corners,draw=lightgray!80!black,fill=lightgray!40]
            ([xshift=-0.25cm,yshift=0.25cm] M.north -| Alice.west) -- 
            ([xshift=0.25cm,yshift=0.25cm] N.north -| Bob.east) -- node[align=left,right,text=lightgray!40!black]{quantum\\ phase} 
            ([xshift=0.25cm,yshift=-0.3cm] N.south -| Bob.east) --
            ([xshift=-0.25cm,yshift=-0.3cm] M.south -| Alice.west) -- cycle;
    \end{scope}

    \node[green_box,minimum width=1.4cm,minimum height=0.7cm] (Ec1) at ([yshift=-1.4cm] Eq.south) {$\mathcal{E}^{\mathrm{auth}}$};
    \draw[thick,->,>=stealth] (Eq.south) -- (Ec1.north);
    \node[green_box,minimum width=1.4cm,minimum height=0.7cm] (Ec2) at ([yshift=-0.8cm] Ec1.south) {$\mathcal{E}^{\mathrm{auth}}$};
    \draw[thick,->,>=stealth] (Ec1.south) -- (Ec2.north);

    \node[blue_box] (PA) at (Ec1.center -| Alice.south) {$\mathcal{P}^{\mathrm{Alice}}$};
    \draw[thick,->,>=stealth] (M.south) -- (PA.north);

    \draw[thick,->,>=stealth] (PA.east) -- (Ec1.west);
    \draw[thick,->,>=stealth] (Ec1.east) -- (Ec1.east -| Bob.south);

    \node[blue_box] (PB) at (Ec2.center -| Bob.south) {$\mathcal{P}^{\mathrm{Bob}}$};
    \draw[thick,->,>=stealth] (N.south) -- (PB.north);
    \draw[thick,->,>=stealth] (PB.west) -- (Ec2.east);
    \draw[thick,->,>=stealth] (Ec2.west) -- (Ec2.west -| Alice.south);

    \begin{scope}[on background layer]
        \draw[rounded corners,draw=lightgray!80!black,fill=lightgray!40]
            ([xshift=-0.25cm,yshift=0.25cm] PA.north -| Alice.west) -- 
            ([xshift=0.25cm,yshift=0.25cm] PA.north -| Bob.east) -- node[align=left,right,text=lightgray!40!black]{classical\\ communication}
            ([xshift=0.25cm,yshift=-0.25cm] PB.south -| Bob.east) --
            ([xshift=-0.25cm,yshift=-0.25cm] PB.south -| Alice.west) -- cycle;
    \end{scope}

    \node[blue_box,minimum width=1.1cm] (KA) at ([yshift=-1.7cm] Ec2 -| Alice.south) {$\mathcal{K}^{\mathrm{Alice}}$};
    \draw[thick,->,>=stealth] (PA.south) -- (PA.south |- KA.north);
    \draw[thick,->,>=stealth] (KA.south) -- ([yshift=-0.7cm] KA.south) node[below]{$K_{A}$};

    \node[blue_box,minimum width=1.1cm] (KB) at ([yshift=-1.7cm] Ec2 -| Bob.south) {$\mathcal{K}^{\mathrm{Bob}}$};
    \draw[thick,->,>=stealth] (PB.south) -- (PB.center |- KB.north);
    \draw[thick,->,>=stealth] (KB.south) -- ([yshift=-0.7cm] KB.south) node[below]{$K_{B}$};

    \draw[thick,->,>=stealth] (Ec2.south) -- ([yshift=-0.7cm] Ec2.south |- KA.south) node[below]{$\Efin$};

    \begin{scope}[on background layer]
        \draw[rounded corners,draw=lightgray!80!black,fill=lightgray!40]
            ([xshift=-0.25cm,yshift=0.25cm] KA.north -| Alice.west) -- 
            ([xshift=0.25cm,yshift=0.25cm] KB.north -| Bob.east) -- node[align=left,right,text=lightgray!40!black]{classical\\ post-processing} 
            ([xshift=0.25cm,yshift=-0.25cm] KB.south -| Bob.east) --
            ([xshift=-0.25cm,yshift=-0.25cm] KA.south -| Alice.west) -- cycle;
    \end{scope}

    \draw[transparent] (-7, 0) -- (7, 0);
\end{tikzpicture}
\caption{\textbf{Quantum information-theoretic model of a QKD protocol.} The diagram shows the subsystems associated with Alice, Bob, and Eve, as well as the operations they perform.
The maps $\mathcal{M}$ and $\mathcal{N}$ represent Alice’s uses of the signal source and Bob’s uses of his measurement device, respectively. The maps $\mathcal{P}^{\text{Alice}}$, $\mathcal{P}^{\text{Bob}}$, $\mathcal{K}^{\text{Alice}}$, and $\mathcal{K}^{\text{Bob}}$ correspond to classical processing steps. This example is deliberately simplified; in general, QKD protocols can involve more complex structures, such as interleaved classical and quantum communication or multiple rounds of classical interaction.}
\label{fig_QKD_protocol}
\end{figure}
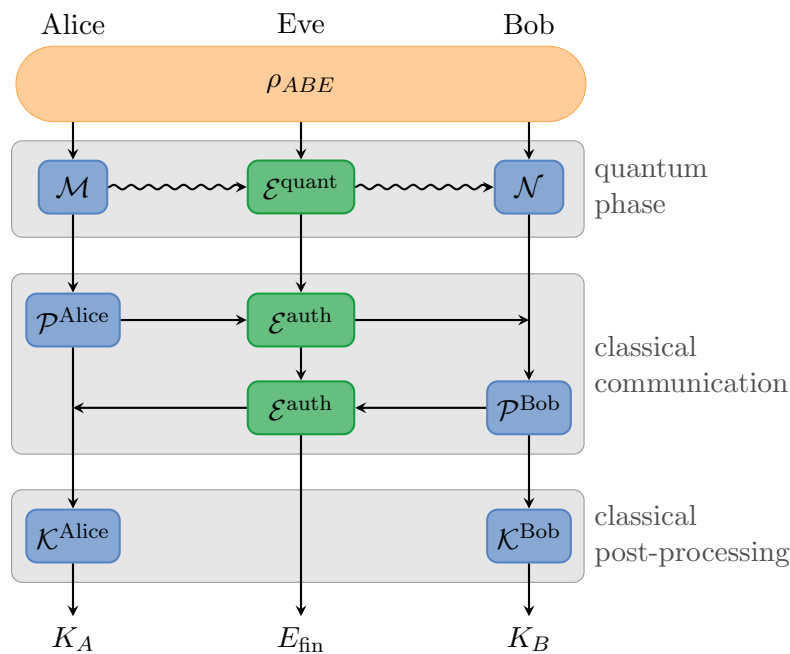

\section{Quantifying security}
\label{sec:security-definition}

We now turn to the main question posed in the introduction.\footnote{As we will focus on QKD, we omit the subscript and write $\varepsilon$ instead of $\varepsilon_{\mathrm{QKD}}$.}
\begin{itemize}
    \item[\textbf{Q}] What does it mean that a key is secure except with probability~$\varepsilon$?
\end{itemize}

To answer this question, it is worth taking inspiration from other engineering disciplines. Consider the design of a safety-critical structure, such as a bridge. Ideally, we would like to guarantee that the bridge will never collapse. In practice, however, it is impossible to account for every possible future scenario---rare combinations of adverse conditions might still lead to failure. No bridge can have a zero probability of collapse. Yet, we continue to build and use them. That is, we accept a bridge as safe if the probability of such a failure, which we may take as the bridge's security parameter~$\varepsilon$, is very small.

We seek a similar guarantee for a QKD scheme: except with negligible probability $\varepsilon$, the key produced by the scheme is secure, i.e., uniform and independent of any side information an adversary may hold. 
To formulate such a claim, we need a precise notion of what it means for two quantum states to be equal with high probability. In classical cryptography, where one considers probability distributions rather than quantum states, the \emph{Coupling Lemma} provides a satisfactory answer to this question \cite[Appendix A.3]{Portmann_2022}. In the quantum setting, however, the situation is more subtle, as the Coupling Lemma does not directly apply. To address this, we adopt an axiomatic approach in which the axioms are motivated operationally.

\subsection{Distinctness probability}
\label{sec:trace-distance}

We introduce a function $\perr: \mathcal{S}(\mathcal{H}) \cross \mathcal{S}(\mathcal{H}) \to \mathbb{R}_{\geq 0}$ that measures the probability that two situations, described by two different quantum states $\rho$ and $\sigma$ from $\mathcal{S}(\mathcal{H})$, are distinct.  Rather than starting with an explicit definition, we begin with a set of axioms that we demand the function to satisfy.

\begin{enumerate}[label=\textbf{A\arabic*}]
    \item \label{itm:triangle_ineq} \textbf{Triangle inequality:} For all $\omega\in\mathcal{S}(\mathcal{H})$
    \begin{equation} \label{eq_trianglebound}
        \perr(\rho,\sigma) \leq \perr(\rho,\omega) +\perr(\omega,\sigma).
    \end{equation}
    \item \label{itm:symmetry} \textbf{Symmetry:}  $\perr(\rho,\sigma)=\perr(\sigma,\rho)$.
    \item \label{itm:operational_bound} \textbf{Operational bound:} If $\rho = (1-\varepsilon) \sigma + \varepsilon \sigma'$ for some $\varepsilon \in [0,1]$ and $\sigma' \in \mathcal{S}(\mathcal{H})$ then\footnote{This bound need not be tight; in fact, it is trivial if $\rho$ is pure, corresponding to an extremal point of $\mathcal{S}(\mathcal{H})$.}
    \begin{equation} \label{eq_errorbound}
        \perr(\rho, \sigma) \leq \varepsilon.
    \end{equation}
    Note that, in particular, $\perr(\rho, \sigma) \leq 1$ holds for all $\rho$ and $\sigma$. Furthermore, we have $\perr(\rho,\sigma) = 0$ if $\rho = \sigma$.
\end{enumerate}

\begin{remark} 
  \sloppy One possible choice for $\perr$ is (a rescaled version of) the trace distance, i.e.,  $\perr(\rho, \sigma) = \frac{1}{M} \| \rho - \sigma\|_1$, for $M \geq 2$. It is straightforward to check that it satisfies the axioms.
\end{remark}

To motivate the axioms, imagine the ``event'' $\rho \neq \sigma$, in which the situations described by $\rho$ and $\sigma$ are distinct. (We emphasise that $\rho \neq \sigma$ is not an event in the usual sense of probability theory,\footnote{This is because $\rho$ and $\sigma$ are not random variables.} and we use this notion merely to provide intuition within this paragraph.)  Axiom~\ref{itm:triangle_ineq} may then be seen as the statement that if $\rho \neq \sigma$ then either $\rho \neq \omega$ or $\omega \neq \sigma$ (or both). Inequality~\eqref{eq_trianglebound} thus corresponds to the union bound from probability theory. Axiom~\ref{itm:symmetry} reflects the idea that $\rho \neq \sigma$ should be equivalent to $\sigma \neq \rho$. Finally, Axiom~\ref{itm:operational_bound} is motivated as follows: If $\rho = (1 - \varepsilon) \sigma + \varepsilon \sigma'$, then we can think of $\rho$ as being produced by picking $\sigma$ with probability $1 - \varepsilon$ and $\sigma'$ with probability $\varepsilon$. Hence, with probability $1 - \varepsilon$, we have picked $\rho=\sigma$, which means that the probability by which they differ can be at most $\varepsilon$.

In cryptographic security, it is prudent to adopt a conservative approach when quantifying the probability of failure. This motivates the following definition.

\begin{definition}[Maximum distinctness probability] \label{def:perr_max}
    The maximum distinctness probability $\perr^{\max}: \mathcal{S}(\mathcal{H}) \times \mathcal{S}(\mathcal{H}) \mapsto [0, 1]$ is the largest function  satisfying Axioms~\ref{itm:triangle_ineq}--\ref{itm:operational_bound}.
\end{definition}

It is not obvious that $\perr^{\max}$ is well-defined by this requirement, i.e., that such a maximum function actually exists. However, the following lemma provides an explicit construction in terms of an optimisation problem. Note also that $\perr^{\max}$, being defined as a maximum over functions, is unique.

\begin{restatable}[]{lemma}{perrmaxexistence} 
\label{lem:perr_max_existence}
  The maximum distinctness probability is given by
    \begin{equation} \label{eq:perrconstruction}
        \perr^{\max}(\rho, \sigma) = \inf \sum_{i=0}^{n-1} \hat{\delta}(\omega_{i}, \omega_{i+1}),
    \end{equation}
    where the infimum runs over all sets of states $\{\omega_{i}\}_{i=0}^{n}$ such that $\omega_{0} = \rho$ and $\omega_{n} = \sigma$. Here $\hat{\delta}(\rho, \sigma) \coloneqq \min\{ \tilde{\delta}(\rho, \sigma), \tilde{\delta}(\sigma, \rho) \}$ with
    \begin{equation} 
    \begin{aligned}
        \tilde{\delta}(\rho,\sigma)  \coloneqq& \inf_{\substack{\varepsilon\in[0,1]\\\sigma'\in \mathcal{S}(\mathcal{H})}} \varepsilon \\
        &\text{s.t. } \rho = (1 - \varepsilon) \sigma + \varepsilon \sigma'.
    \end{aligned}
    \end{equation}
\end{restatable}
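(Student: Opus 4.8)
The plan is to show that the function $F(\rho,\sigma) := \inf \sum_{i=0}^{n-1} \hat\delta(\omega_i,\omega_{i+1})$ defined by the right-hand side of \eqref{eq:perrconstruction} is itself a function satisfying \ref{itm:triangle_ineq}--\ref{itm:operational_bound}, and that it dominates every other such function; by \cref{def:perr_max} this forces $F = \perr^{\max}$. I would first record the easy half: $F$ satisfies the three axioms. Symmetry \ref{itm:symmetry} is immediate because $\hat\delta$ is symmetric by construction and a path from $\rho$ to $\sigma$ reverses to a path from $\sigma$ to $\rho$ with the same total cost. The triangle inequality \ref{itm:triangle_ineq} is the standard ``concatenation of paths'' argument: given a near-optimal path from $\rho$ to $\omega$ and one from $\omega$ to $\sigma$, their concatenation is an admissible path from $\rho$ to $\sigma$, so $F(\rho,\sigma)$ is at most the sum; taking infima gives \eqref{eq_trianglebound}. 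The operational bound \ref{itm:operational_bound} is witnessed by the length-one path $\omega_0 = \rho$, $\omega_1 = \sigma$: if $\rho = (1-\varepsilon)\sigma + \varepsilon\sigma'$ then $\tilde\delta(\rho,\sigma) \le \varepsilon$, hence $\hat\delta(\rho,\sigma) \le \varepsilon$, hence $F(\rho,\sigma) \le \varepsilon$. (One should also note $F \ge 0$ and $F(\rho,\rho)=0$, both trivial.)

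The substantive half is maximality: I must show that if $G$ is any function on $\mathcal{S}(\mathcal{H})\times\mathcal{S}(\mathcal{H})$ satisfying \ref{itm:triangle_ineq}--\ref{itm:operational_bound}, then $G(\rho,\sigma) \le F(\rho,\sigma)$ for all $\rho,\sigma$. The idea is that any such $G$ is ``small on single steps'' and the triangle inequality propagates this along paths. Concretely, fix $\rho,\sigma$ and an admissible path $\omega_0 = \rho, \omega_1, \dots, \omega_n = \sigma$. By iterating \ref{itm:triangle_ineq}, $G(\rho,\sigma) \le \sum_{i=0}^{n-1} G(\omega_i,\omega_{i+1})$. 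So it suffices to show $G(\omega_i,\omega_{i+1}) \le \hat\delta(\omega_i,\omega_{i+1})$ for each edge; summing and then taking the infimum over paths yields $G(\rho,\sigma) \le F(\rho,\sigma)$. Thus everything reduces to the single-edge claim: for any two states $\tau,\tau'$, $G(\tau,\tau') \le \hat\delta(\tau,\tau')$. By symmetry \ref{itm:symmetry} of $G$ and the definition $\hat\delta = \min\{\tilde\delta(\tau,\tau'),\tilde\delta(\tau',\tau)\}$, it is enough to show $G(\tau,\tau') \le \tilde\delta(\tau,\tau')$. Now let $\varepsilon$ be any value admissible in the optimisation for $\tilde\delta(\tau,\tau')$, i.e. $\tau = (1-\varepsilon)\tau' + \varepsilon\sigma'$ for some $\sigma'\in\mathcal{S}(\mathcal{H})$. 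Axiom \ref{itm:operational_bound} applied to $G$ gives exactly $G(\tau,\tau') \le \varepsilon$. Taking the infimum over admissible $\varepsilon$ gives $G(\tau,\tau') \le \tilde\delta(\tau,\tau')$, completing the single-edge claim and hence the whole argument.

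A couple of technical points deserve care. First, the infimum defining $\tilde\delta$ could in principle be over an empty set (if $\tau'$ cannot be written as $(1-\varepsilon)^{-1}(\tau - \varepsilon\sigma')$ for any valid $\varepsilon < 1$ and state $\sigma'$); but $\varepsilon = 1$ with $\sigma' = \tau$ is always admissible, so $\tilde\delta(\tau,\tau') \le 1$ and the set is nonempty — good, and this also re-derives $\perr^{\max} \le 1$. Second, one should confirm that admissible paths always exist so that $F$ is not vacuously $+\infty$: the one-step path $\rho,\sigma$ always works, with cost $\hat\delta(\rho,\sigma) \le 1$, so $F \le 1$ everywhere. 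Third, a subtle point is whether $F$ itself is \emph{attained} or only approached; we do not need attainment, since \cref{def:perr_max} only asks for the largest function satisfying the axioms, and we have shown $F$ satisfies them and dominates all competitors — it is therefore the maximum, which incidentally also establishes that the maximum exists (addressing the remark preceding the lemma).

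I expect the main obstacle to be organisational rather than deep: making sure the maximality argument correctly reduces to the single-edge bound, and in particular being careful that \ref{itm:triangle_ineq} for $G$ is stated for arbitrary intermediate $\omega\in\mathcal{S}(\mathcal{H})$ (which it is), so that iterating it along the path $\omega_0,\dots,\omega_n$ is legitimate. There is no hard analysis here — no compactness or limiting argument is needed because we work with infima throughout and never need to exhibit optimisers — so the proof is essentially a clean two-way inequality: $F$ satisfies the axioms (hence $F \le \perr^{\max}$, since $\perr^{\max}$ is the largest such), and every axiom-satisfying $G$ obeys $G \le F$ (hence in particular $\perr^{\max} \le F$).
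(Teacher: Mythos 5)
Your proposal is correct and follows essentially the same route as the paper: verify that the path-infimum construction satisfies Axioms~\ref{itm:triangle_ineq}--\ref{itm:operational_bound} (via path concatenation, path reversal, and the one-step path), and show it dominates any axiom-satisfying function by iterating the triangle inequality along a path and bounding each edge by $\hat{\delta}$ using symmetry and the operational bound. The only difference is presentational (the paper folds the maximality argument into a single chain of inequalities before checking the axioms), and your added remarks on feasibility of the optimisations are correct but not needed.
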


The proof of this lemma, as well as the claims below, can be found in \cref{sec:proofs}. 

We continue with the description of other properties that we expect such a measure of distinctness to satisfy.

\begin{enumerate}[label=\textbf{P\arabic*}]
    \item \label{itm:positive_definite} \textbf{Positive definiteness:} $\perr(\rho,\sigma)=0$ if and only if $\rho=\sigma$.
    \item \label{itm:data_processing} \textbf{Data processing:} For any CPTP map $\mathcal{E}_{B|A}$,
    \begin{equation}
        \perr\left(\mathcal{E}_{B|A}[\rho_{A}], \mathcal{E}_{B|A}[\sigma_{A}]\right) \leq \perr(\rho_{A}, \sigma_{A}).
    \end{equation}
\end{enumerate}

Recall that the ``if'' direction of Property~\ref{itm:positive_definite} is immediate from~\ref{itm:operational_bound}. The converse, however, does not follow from the axioms: for example, if $\perr$ were defined as the constant zero function, then \ref{itm:triangle_ineq}–\ref{itm:operational_bound} would be satisfied, but~\ref{itm:positive_definite} would fail. Nonetheless, the requirement is reasonable if we want to think of the probability $\perr(\rho, \sigma)$ to be a faithful indicator for the differences between $\rho$ and $\sigma$. Property~\ref{itm:data_processing} is also a natural requirement because $\rho_A = \sigma_A$ obviously implies that $\mathcal{E}_{B|A}[\rho_A] = \mathcal{E}_{B|A}[\sigma_A]$. 

We will also be interested in the relation between the maximum distinctness probability and known distance measures, notably the trace distance, which is defined by the Schatten $1$-norm. The following upper bound holds for any $\perr$ satisfying \ref{itm:triangle_ineq}–\ref{itm:operational_bound}.

\begin{restatable}[]{lemma}{perrupperbound}
    \label{lem:perr_upper}
    Let $\perr: \mathcal{S}(\mathcal{H}) \cross \mathcal{S}(\mathcal{H}) \to \mathbb{R}_{\geq 0}$ satisfy Axioms~\ref{itm:triangle_ineq}--\ref{itm:operational_bound}. Then
    \begin{equation} \label{eq_perr_upper}
        \perr(\rho, \sigma) \leq \| \rho - \sigma \|_{1}
    \end{equation}
    holds for all $\rho, \sigma \in \mathcal{S}(\mathcal{H})$.
\end{restatable}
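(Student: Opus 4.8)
I would prove the bound by inserting a single well-chosen auxiliary state between $\rho$ and $\sigma$ and invoking the axioms twice. Write $\Delta := \rho - \sigma$, let $\Delta_{\pm}\succeq 0$ be its positive and negative parts, and set $\delta := \tr\Delta_{+} = \tr\Delta_{-} = \tfrac12\|\rho-\sigma\|_{1}$ (we may assume $\delta>0$, since otherwise $\rho=\sigma$ and $\perr(\rho,\sigma)=0$ by~\ref{itm:operational_bound}). Note that a single ``length-one'' application of~\ref{itm:operational_bound} is not enough on its own: there are non-commuting $\rho,\sigma$ for which neither convex decomposition $\rho=(1-\varepsilon)\sigma+\varepsilon\sigma'$ nor $\sigma=(1-\varepsilon)\rho+\varepsilon\rho'$ is available with $\varepsilon\le\|\rho-\sigma\|_1$. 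The trick is to route through a state that both $\rho$ and $\sigma$ are ``close to from below'', and the right such state is the \emph{join} of $\rho$ and $\sigma$, not their meet.

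Concretely, I would define
\begin{equation*}
  \mu := \sigma + \Delta_{+} = \rho + \Delta_{-}, \qquad \hat\mu := \tfrac{1}{1+\delta}\,\mu,
\end{equation*}
the two expressions for $\mu$ agreeing because $\Delta_{+}-\Delta_{-}=\Delta$. Since $\Delta_{\pm}\succeq 0$, the operator $\mu$ is positive and dominates both $\rho$ and $\sigma$, and $\tr\mu = 1+\delta$, so $\hat\mu\in\mathcal S(\mathcal H)$; moreover $\Delta_{+}/\delta$ and $\Delta_{-}/\delta$ are states. A routine rearrangement then gives the two convex decompositions
\begin{equation*}
  \hat\mu = \Bigl(1-\tfrac{\delta}{1+\delta}\Bigr)\rho + \tfrac{\delta}{1+\delta}\cdot\tfrac{\Delta_{-}}{\delta}
          = \Bigl(1-\tfrac{\delta}{1+\delta}\Bigr)\sigma + \tfrac{\delta}{1+\delta}\cdot\tfrac{\Delta_{+}}{\delta}.
\end{equation*}
Applying Axiom~\ref{itm:operational_bound} to each of these yields $\perr(\hat\mu,\rho)\le\tfrac{\delta}{1+\delta}$ and $\perr(\hat\mu,\sigma)\le\tfrac{\delta}{1+\delta}$, whence by Axiom~\ref{itm:symmetry} also $\perr(\rho,\hat\mu)\le\tfrac{\delta}{1+\delta}$.

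Finally, the triangle inequality Axiom~\ref{itm:triangle_ineq} applied with intermediate state $\hat\mu$ gives
\begin{equation*}
  \perr(\rho,\sigma)\;\le\;\perr(\rho,\hat\mu)+\perr(\hat\mu,\sigma)\;\le\;\frac{2\delta}{1+\delta}\;\le\;2\delta\;=\;\|\rho-\sigma\|_{1},
\end{equation*}
which is the claim (in fact the slightly stronger bound $\perr(\rho,\sigma)\le\tfrac{\|\rho-\sigma\|_1}{1+\tfrac12\|\rho-\sigma\|_1}$). The main obstacle is conceptual rather than computational: the natural candidate for the intermediate state—the largest subnormalised $\omega$ with $\omega\preceq\rho$ and $\omega\preceq\sigma$—is not in general available with trace $1-\delta$ once $\rho$ and $\sigma$ fail to commute, so one has to realise that passing instead through the normalised join $\hat\mu$, which always has trace exactly $1+\delta$, is what makes the two steps balance. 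Everything else is bookkeeping with the axioms; via Lemma~\ref{lem:perr_max_existence} the same chain also shows $\perr^{\max}(\rho,\sigma)\le\|\rho-\sigma\|_1$.
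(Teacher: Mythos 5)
Your proof is correct and follows essentially the same route as the paper: your intermediate state $\hat\mu = (\rho+\Delta_-)/(1+\delta) = (\sigma+\Delta_+)/(1+\delta)$ is exactly the state $\omega$ that the paper constructs in its auxiliary lemma (\cref{lem: trace distance implies intuition}), and the subsequent two applications of Axiom~\ref{itm:operational_bound}, together with Axioms~\ref{itm:symmetry} and~\ref{itm:triangle_ineq}, mirror the paper's argument step for step. The sharper constant $\tfrac{2\delta}{1+\delta}$ you note is likewise implicit in the paper's $\varepsilon' = \varepsilon/(1+\varepsilon)$.
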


\begin{remark}
    Alternatively, the bound~\cref{eq_perr_upper} could be derived from Axiom~\ref{itm:triangle_ineq} and a symmetrized version of \cref{eq_errorbound} in Axiom~\ref{itm:operational_bound}, i.e., symmetry for all states (Axiom~\ref{itm:symmetry}) is not necessary.
\end{remark}

Because $\perr(\rho, \sigma) = \frac{1}{M} \| \rho - \sigma\|_1$, for $M \geq 2$, is a possible choice (see the earlier remark), the converse to \cref{lem:perr_upper} cannot be true in general. However, the following lemma shows that the maximum distincness probability $\perr^{\max}$ satisfies such a bound and that it also has the other properties discussed above.

\begin{restatable}[]{lemma}{perrlowerbound} \label{lem:perr_lower_bound}
    The maximum distinctness probability $\perr^{\max}$ satisfies Properties \ref{itm:positive_definite} and \ref{itm:data_processing}, and it is lower-bounded by
    \begin{equation}
        \perr^{\max}(\rho, \sigma) \geq \frac{1}{2} \| \rho - \sigma \|_{1}.
    \end{equation}
\end{restatable}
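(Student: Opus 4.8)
The plan is to split the statement into three parts and handle them in a convenient order, leaning heavily on the explicit formula \eqref{eq:perrconstruction} for $\perr^{\max}$ from \cref{lem:perr_max_existence}. First I would establish the lower bound $\perr^{\max}(\rho,\sigma) \geq \tfrac{1}{2}\|\rho-\sigma\|_1$, since Properties \ref{itm:positive_definite} and \ref{itm:data_processing} both follow easily once this is in hand. The key observation is that the rescaled trace distance $\tfrac{1}{2}\|\cdot - \cdot\|_1$ itself satisfies Axioms \ref{itm:triangle_ineq}--\ref{itm:operational_bound} (the triangle inequality and symmetry are standard properties of a norm; the operational bound $\tfrac12\|\rho-\sigma\|_1 \leq \varepsilon$ when $\rho = (1-\varepsilon)\sigma + \varepsilon\sigma'$ follows from $\rho - \sigma = \varepsilon(\sigma' - \sigma)$ together with $\|\sigma'-\sigma\|_1 \leq 2$). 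Since $\perr^{\max}$ is by \cref{def:perr_max} the \emph{largest} function satisfying these axioms, it dominates $\tfrac12\|\cdot-\cdot\|_1$ pointwise, which is exactly the claimed lower bound.

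Next, \textbf{Positive definiteness \ref{itm:positive_definite}}: the "if" direction ($\rho=\sigma \Rightarrow \perr^{\max}=0$) is noted in the text as immediate from \ref{itm:operational_bound} (take $\varepsilon=0$). For the "only if" direction, suppose $\perr^{\max}(\rho,\sigma)=0$; by the lower bound just proved, $\tfrac12\|\rho-\sigma\|_1 \leq \perr^{\max}(\rho,\sigma) = 0$, so $\rho=\sigma$ by positive definiteness of the trace norm. Then for \textbf{Data processing \ref{itm:data_processing}}: I would verify directly from the infimum formula \eqref{eq:perrconstruction} that applying a fixed CPTP map $\mathcal{E}_{B|A}$ to both arguments cannot increase $\perr^{\max}$. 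Given any chain $\{\omega_i\}_{i=0}^n$ with $\omega_0 = \rho_A$, $\omega_n = \sigma_A$, the image chain $\{\mathcal{E}_{B|A}[\omega_i]\}_{i=0}^n$ connects $\mathcal{E}_{B|A}[\rho_A]$ to $\mathcal{E}_{B|A}[\sigma_A]$, and each $\hat\delta$-link can only shrink: if $\omega_i = (1-\varepsilon)\omega_{i+1} + \varepsilon\omega'$, then $\mathcal{E}_{B|A}[\omega_i] = (1-\varepsilon)\mathcal{E}_{B|A}[\omega_{i+1}] + \varepsilon\,\mathcal{E}_{B|A}[\omega']$ by linearity, so $\tilde\delta(\mathcal{E}_{B|A}[\omega_i], \mathcal{E}_{B|A}[\omega_{i+1}]) \leq \tilde\delta(\omega_i, \omega_{i+1})$, and symmetrically for $\hat\delta$. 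Taking the infimum over the (sub)set of chains arising as images gives $\perr^{\max}(\mathcal{E}_{B|A}[\rho_A], \mathcal{E}_{B|A}[\sigma_A]) \leq \perr^{\max}(\rho_A,\sigma_A)$.

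The main obstacle I anticipate is nothing deep but rather a bookkeeping subtlety: the data-processing argument must be careful that $\mathcal{E}_{B|A}$ applied to a state on a larger system (as in \ref{itm:data_processing}, where a tensor factor $R$ may be present) still respects convex decompositions — this is fine because $\mathcal{E}_{B|A}\otimes\mathcal{I}_R$ is itself CPTP and linear, so the same chain argument goes through verbatim on $\mathcal{S}(\mathcal{H}_A \otimes \mathcal{H}_R)$. One should also double-check that $\hat\delta$, being a minimum of the two one-sided quantities $\tilde\delta(\rho,\sigma)$ and $\tilde\delta(\sigma,\rho)$, inherits the monotonicity link-by-link, which it does since $\min$ is monotone in each argument. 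No continuity or compactness input beyond what already underlies \cref{lem:perr_max_existence} is needed.
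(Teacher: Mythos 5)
Your proposal is correct, and the data-processing and positive-definiteness parts proceed essentially as in the paper (the paper packages the link-level monotonicity of $\tilde{\delta}$ as a separate auxiliary lemma, \cref{lem:predist_data_processing}, but the content is identical to your inlined argument, including the observation that image chains form a subset of all chains between the image states). Where you genuinely diverge is the lower bound: the paper works from the explicit formula \eqref{eq:perrconstruction}, first showing $\tilde{\delta}(\rho,\sigma) \geq \frac{1}{2}\|\rho-\sigma\|_1$ (\cref{lem:predist_lower_bound}), deducing the same for $\hat{\delta}$ by symmetry of the trace distance, and then summing along an arbitrary chain using the triangle inequality of the Schatten $1$-norm. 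You instead verify that $\frac{1}{2}\|\cdot-\cdot\|_1$ itself satisfies Axioms~\ref{itm:triangle_ineq}--\ref{itm:operational_bound} and invoke the maximality in \cref{def:perr_max} directly. Both arguments are valid and both ultimately rest on \cref{lem:perr_max_existence} (yours needs it to guarantee that $\perr^{\max}$ exists and dominates every axiom-satisfying function; the paper's needs it for the chain formula). Your route is slightly more economical---it reuses the remark that the rescaled trace distance satisfies the axioms and avoids the chain/triangle-inequality bookkeeping---while the paper's route has the minor advantage of extracting \cref{lem:predist_lower_bound} as a reusable fact about $\tilde{\delta}$. No gaps.
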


From this and \cref{lem:perr_upper}, we conclude that $\perr^{\max}(\rho, \sigma)$ lies in between $\frac{1}{2} \|\rho - \sigma\|_1$ and $\|\rho - \sigma\|_1$, i.e., the maximum distinctness probability is tightly characterised  by the trace distance.\footnote{One may wonder whether a similar result holds in gereralised probabilistic theories (GPTs), which allow defining security without relying on quantum theory~\cite{BHK2005}. Within GPTs, it is common to define a distance measure~$\delta_{\mathrm{dist}}$ based on the distinguishing advantage (see \cref{sec_critiques} for a brief explanation). Following the approach proposed here, one may instead consider a measure like $\perr^{\max}$ based on Axioms~\ref{itm:triangle_ineq}--\ref{itm:operational_bound}, which can also be formulated within GPTs. While we expect that $\delta_{\mathrm{dist}} \leq \perr^{\max}$, it could be that this relation is not tight. A security definition within the GPT framework based on $\perr^{\max}$ may then be strictly more stringent than one based on $\delta_{\mathrm{dist}}$ (see Definitions~1--3 in~\cite{HR2010}).}

\subsection{Security definition} \label{sec:securitydefinition}

In this section, we discuss the \emph{$\varepsilon$-security criterion}, introduced in \cite{BHLMO2005,RennerKoenig2005} and now a de facto standard in QKD. We present a generalized formulation that accommodates variable-length protocols with asymmetric abort conditions, which in turn recovers, as a special case, the commonly studied fixed-length protocols with symmetric abort conditions.

\begin{remark} 
  The existing literature is not entirely consistent regarding the definition of the security parameter~$\varepsilon$. However, the definitions differ by, at most, a factor of~$2$, which is irrelevant in practice. The reason is that a protocol with security $\varepsilon$ can be turned into one with improved security $\varepsilon/2$ by slightly adapting the protocol parameters. Crucially, such an adaption has only a small impact on the resource requirements. 
\end{remark}

\subsubsection{The $\boldsymbol{\varepsilon}$-security criterion}
\label{subsec:security_criteria}

Let $\Omega_l$ be the event that a key of size $l$ was produced, let $K_{A}^{l}$ and $K_{B}^{l}$ represent Alice's and Bob's (classical) keys of length $l$ (i.e., registers with alphabet $\{0,1\}^l$), and let $\Efin$ be (possibly quantum) information the adversary may have about these keys (see also Fig.~\ref{fig_QKD_protocol}). We denote the state of these systems after the execution of a given QKD protocol, conditioned on $\Omega_l$, by $\rho^{\mathrm{real}}_{K_{A}^{l}K_{B}^{l}\Efin|\Omega_l}$ and call it the \emph{real state}.

Furthermore, let us define a corresponding \emph{ideal state} as the hypothetical joint state these systems would admit if the QKD protocol were perfectly secure. By this we mean that the following two properties hold:
\begin{itemize}
    \item The two keys $K_A^{l}$ and $K_B^{l}$ are identical.
    \item The key $K_A^{l}$ is uniformly distributed and uncorrelated with Eve's system $\Efin$.
\end{itemize}
Mathematically, we capture these properties by the conditions
\begin{equation}
    \rhoid_{K_{A}^{l} K_{B}^{l}|\Omega_l} = \sum_{k\in\{0,1\}^l} p^{(l)}_{k} \ketbra{k, k}{k, k}_{K_{A}^{l} K_{B}^{l}}
\end{equation}
and
\begin{equation}
    \rhoid_{K_{A}^{l}\Efin|\Omega_l}= \frac{\mathbb{1}_{K_{A}^{l}}}{2^l} \otimes \rho^{\mathrm{real}}_{\Efin|\Omega_l}  = \tau_{K_A^l}\otimes \rho^{\mathrm{real}}_{\Efin|\Omega_l},
\end{equation}
where $p_k^{(l)}$ for ${k\in\{0,1\}^l}$ is a probability distribution over the set of possible keys, and $\smash{\tau_{K_{A}^l}} \coloneqq \smash{2^{-l} \mathds{1}_{K_A^l}}$ is the maximally mixed state on $K_{A}^l$. Combining these two conditions, we find that the ideal state conditioned on $\Omega_l$ is
\begin{equation}\label{eq_idealstate}    \rhoid_{K_{A}^{l}K_{B}^{l}\Efin|\Omega_l} = \underbrace{\left( \frac{1}{2^l}\sum_{k\in\{0,1\}^l} \ketbra{k,k}_{K_{A}^{l} K_{B}^{l}} \right)}_{\eqcolon\tau_{K_{A}^{l}K_{B}^{l}}} \otimes \rho^{\mathrm{real}}_{\Efin|\Omega_{l}}, 
\end{equation}
where $\tau_{K_{A}^{l} K_{B}^{l}}$ is the maximally correlated state of the two keys $K_{A}^{l}$ and $K_{B}^{l}$.

\begin{remark} 
    In the literature, the ideal state is sometimes allowed to have a different marginal on $\Efin$ than the real state $\rho^{\mathrm{real}}$, i.e., the ideal state is taken to be of the form
    \begin{equation}
        \rhoid_{K_{A}^{l}K_{B}^{l}\Efin|\Omega_l} = \tau_{K_{A}^{l}K_{B}^{l}} \otimes \sigma_{\Efin|\Omega_{l}},
    \end{equation}
    where $\sigma_{\Efin|\Omega_{l}}$ may differ from $\smash{\rho^{\mathrm{real}}_{\Efin|\Omega_{l}}}$. As shown in \cite[Footnote~24]{Portmann_2022}, this leads to the same security criterion, up to a factor of~$2$ in the security parameter~$\varepsilon$ (see also the remark above). Here we stick to~\eqref{eq_idealstate}, which has more natural properties under composition (see Section~\ref{sec_composability}).
\end{remark}

In \cref{lem:perr_upper}, we showed that twice the trace distance provides an upper bound on the probability that two states differ, 
\begin{equation} \label{eq_PdiffTD}
    \perr(\rho^{\text{real}}, \rhoid) \leq \| \rho^{\text{real}} - \rhoid \|_{1}.
\end{equation}
The central idea of the security definition (stated below as \cref{def:security}) is thus to require that the right hand side of this inequality is bounded by~$\varepsilon$.

Our considerations so far were conditioned on the event $\Omega_l$ that a key of a fixed length~$l$ was generated. To account for protocols that produce a key of varying length $l$, we consider final key registers $K_A$ and $K_B$ given by a direct sum of all possible key lengths, i.e., 
\begin{equation}
    \mathcal{H}_{K_A} = \bigoplus_{l \geq 0} \mathcal{H}_{K_A^l} \quad \mathrm{and} \quad \mathcal{H}_{K_B} = \bigoplus_{l \geq 0} \mathcal{H}_{K_B^l}.
\end{equation}
Note that Eve may always provoke an abort of the protocol, which we would regard as the generation of a key of length $l=0$. 

For the purpose of generality, one may also include the possibility of an asymmetric abort, meaning that Alice produces a key whereas Bob doesn't, or vice versa. This can be achieved by replacing the key length~$l$ by two values, $l_A$ and $l_B$, corresponding to Alice and Bob's key lengths, respectively, in the expressions above. We then either have $l_A = l_B \neq 0$ (i.e., both parties accept and produce a key) or $l_A = 0$, or $l_B = 0$ (i.e., either or both of the parties abort). Defining $\mathcal{K} \coloneqq \{(l_A, l_B) : l_A = l_B \lor l_A = 0 \lor l_B = 0\}$,  the real state can thus be written as
\begin{equation} \label{eq:real_state_asymmetric_abort}
    \rho_{K_A K_B \Efin}^{\mathrm{real}} = \sum_{(l_A, l_B) \in \mathcal{K}} \mathrm{Pr}[\Omega_{l_A,l_B}]\,\rho_{K_A^{l_A} K_B^{l_B} \Efin|\Omega_{l_A,l_B}}^\mathrm{real},
\end{equation}
where $\Omega_{l_A,l_B}$ denotes the event that keys of lengths $l_A$ and $l_B$ were produced. We define the ideal state as the real state in which the key is replaced with a perfectly secret key. Formally, we have
\begin{equation}
    \rho_{K_A K_B \Efin}^\mathrm{ideal} \coloneqq \left(\mathcal{R}_{K_A K_B} \otimes \mathcal{I}_{\Efin}\right) \left[ \rho_{K_A K_B \Efin}^\mathrm{real} \right],
\end{equation}
where $\mathcal{R}$ is the \emph{key-replacer map},  which maps any input $\sigma_{K_A K_B}$ to
\begin{equation} \label{eq:key_replacer_map}
\begin{aligned}
    \mathcal{R}_{K_A K_B}[\sigma_{K_A K_B}]
    &= \mathrm{Pr}_\sigma[\Omega_{0,0}] \ketbra{\varnothing}_{K_A^0} \otimes \ketbra{\varnothing}_{K_B^0} \\
    &\hspace{5pt}+ \sum_{l > 0} \begin{aligned}[t]
        &\left( \mathrm{Pr}_{\sigma}[\Omega_{l,0}] \tau_{K_A^l} \otimes \ketbra{\varnothing}_{K_B^0} \right.\\
        &\;\;+ \mathrm{Pr}_{\sigma}[\Omega_{0,l}] \ketbra{\varnothing}_{K_A^0} \otimes \tau_{K_B^l} \\
        &\;\;\left. +\, \mathrm{Pr}_{\sigma}[\Omega_{l,l}] \tau_{K_A^l K_B^l}  \right),
    \end{aligned}
\end{aligned}
\end{equation}
where $\varnothing$ denotes the bitstring of length zero.

We are now ready to state the QKD security definition.

\begin{definition}[$\varepsilon$-security] \label{def:security}
    Let $\varepsilon \in [0, 1]$. We call a QKD protocol $\varepsilon$-secure if the final state of the protocol, $\rho^{\mathrm{real}}_{K_{A}K_{B}\Efin}$, satisfies
    \begin{equation} \label{eq:security_definition}
        \left\| \rho^{\mathrm{real}}_{K_{A}K_{B}\Efin} - \rhoid_{K_{A}K_{B}\Efin}\right\|_{1} \leq \varepsilon.
    \end{equation}
\end{definition}

Because of \eqref{eq_PdiffTD}, this implies
\begin{equation} \label{eq:security_perr}
  \perr^{\max}\left( \rho^{\mathrm{real}}_{K_{A}K_{B}\Efin}, \rhoid_{K_{A}K_{B}\Efin}\right) \leq \varepsilon.
\end{equation}
Hence, following the considerations of \cref{sec:trace-distance}, one can interpret the security parameter~$\varepsilon$ as the maximum probability that the (real) key generated by the protocol deviates from an (ideal) perfectly secure key. This answers question~\textbf{Q} posed in the introduction.

The security definition given here is more general than the one typically found in the literature \cite{Portmann_2022}. We now show that the latter arises as a special case.

\subsubsection{QKD with symmetric abort and fixed key length}

For simplicity, it is common to assume that the abort always occurs symmetrically, i.e., either Alice and Bob \emph{both} accept or they \emph{both} abort.\footnote{While this condition cannot be enforced (Eve may always corrupt the last message exchanged for confirming acceptance), it is a useful idealisation that allows focussing on the core concepts of the security definition.} In this case, the state in \cref{eq:real_state_asymmetric_abort} simplifies to

\begin{equation}
    \rho_{K_A K_B \Efin}^{\mathrm{real}} = \sum_{l} \mathrm{Pr}[\Omega_{l}]\,\rho_{K_A^lK_B^l\Efin|\Omega_{l}}^\mathrm{real},
\end{equation}
where $\Omega_l$ denotes the event of a key of length $l$ being produced. The abort event corresponds to the event $l=0$.

In the symmetric abort case, it often makes sense to split the security criterion in \cref{eq:real_state_asymmetric_abort} into two parts, called \emph{correctness} and \emph{secrecy}.

\begin{definition}[$\varepsilon$-correctness]
\label{def: trace distance criterion}
    Let $\varepsilon \in [0, 1]$. We call a QKD protocol $\varepsilon$-correct if the final state of the protocol, $\rho^{\mathrm{real}}_{K_{A}K_{B}}$, satisfies
        \begin{equation}
            \left\| \rho_{K_{A}K_{B}}^{\mathrm{real}} - \hat{\rho}_{K_{A}K_{B}} \right\|_1 \leq \varepsilon,
        \end{equation}
        where $\hat{\rho}_{K_A K_B}$ is equal to $\rho_{K_A K_B}^{\mathrm{real}}$ but with Bob's key replaced by a copy of Alice's key.
\end{definition}

\begin{remark}
As shown in \cref{lem:correctness_alt_form}, the correctness condition can be rewritten in the following equivalent form
\begin{equation}
    \mathrm{Pr}[K_{A} \neq K_{B} \land l > 0] \leq \frac{1}{2} \varepsilon,
\end{equation}
which is frequently used in the literature (up to a factor of~$2$, as remarked earlier).
\end{remark}
\begin{definition}[$\varepsilon$-secrecy]
    \label{def: var length trace distance criterion}
    Let $\varepsilon \in [0, 1]$. We call a QKD protocol $\varepsilon$-secret according to Alice if the final state of the protocol, $\rho^{\mathrm{real}}_{K_{A}\Efin}$, satisfies
    \begin{equation} \label{eq:secrecy_definition}
    \left\| \rho^{\mathrm{real}}_{K_{A}\Efin} - \rhoid_{K_{A}\Efin}\right\|_{1} \leq \varepsilon.
    \end{equation}
\end{definition}

\begin{remark}    
The secrecy definition in \cref{eq:secrecy_definition} can be rewritten as
\begin{equation}
    \sum_{l} \mathrm{Pr}[\Omega_{l}] \left\| \rho^{\mathrm{real}}_{K_{A}^{l}\Efin|\Omega_{l}} - \tau_{K_A^l} \otimes \rho^{\mathrm{real}}_{\Efin|\Omega_{l}} \right\|_{1} \leq \varepsilon,
\end{equation}
where $\Omega_{l}$ denotes the event that a key of length $l$ was produced and $\tau_{K_A^l} = \frac{\mathds{1}}{2^{l}}$ denotes the maximally mixed state on $K_A^l$. 
\end{remark}

\begin{remark}[Fixed-length security]
An important special case of the security definition is when the key length $l$ can only take one of two values $l \in \{0, L\}$, where $l = 0$ corresponds to the protocol aborting whereas $l = L$ corresponds to the protocol accepting. Then $\varepsilon$-secrecy as per \cref{def: var length trace distance criterion} is captured by the simplified criterion
\begin{equation} \label{eq:fixed-length-secrecy}
    \mathrm{Pr}\left[\Omega\right] \left\| \rho^{\mathrm{real}}_{K_{A}^L\Efin|\Omega} - \tau_{K_{A}^L} \otimes \rho^{\mathrm{real}}_{\Efin|\Omega} \right\|_{1} \leq \varepsilon,
\end{equation}
where $\Omega$ denotes the event of the protocol accepting, and $\tau_{K_{A}^L}$ is the maximally mixed state on $K_{A}^L$. Note also the prefactor $\mathrm{Pr}[\Omega]$ in \cref{eq:fixed-length-secrecy}, which is important. It accounts for attack strategies that allow Eve to learn the key but have a negligible success probability (e.g., Eve could correctly guess all the basis choices in a BB84 protocol). Sometimes this prefactor is ignored and it is incorrectly stated that \emph{conditioned on accepting}, the key is secure (see \cite{BBBMR2006} for a  discussion of this issue). The correct statement is that the probability of  producing an insecure key \emph{and accepting} is small.
\end{remark}

The secrecy criterion is defined with respect to Alice’s key $K_A$. An analogous formulation can be made for Bob’s key $K_B$; however, if the correctness criterion holds so that $K_A = K_B$, then the secrecy of $K_B$ follows directly from that of $K_A$. Hence, security can be derived from the combination of secrecy and correctness. This relationship is captured by the following lemma, whose proof is given in \cite[Theorem~2]{Portmann_2022}.

\begin{lemma}[$\varepsilon$-security from secrecy and correctness] 
    Consider a QKD protocol that is $\varepsilon_{\textup{cor}}$-correct and $\varepsilon_{\textup{sec}}$-secret according to Alice for $\varepsilon_{\textup{cor}},\varepsilon_{\textup{sec}}\in [0,1]$. Then the final state of the protocol, $\rho^{\mathrm{real}}_{K_{A}K_{B}\Efin}$, is $\varepsilon$-secure for $\varepsilon=\varepsilon_{\textup{sec}}+\varepsilon_{\textup{cor}}$, i.e.,
    \begin{equation}
        \left\lVert \rho^{\mathrm{real}}_{K_AK_B\Efin} - \rhoid_{K_AK_B\Efin}\right\rVert_{1} \leq \varepsilon.
    \end{equation}
\end{lemma}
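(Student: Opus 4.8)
The plan is to prove $\varepsilon$-security by exhibiting an intermediate state and invoking the triangle inequality for the trace norm, with correctness controlling the first leg and secrecy the second. First I would introduce the state $\hat{\rho}_{K_A K_B \Efin}$ obtained from $\rho^{\mathrm{real}}_{K_A K_B \Efin}$ by replacing Bob's key with a verbatim copy of Alice's key (i.e.\ the natural extension of the $\hat{\rho}_{K_A K_B}$ from \cref{def: trace distance criterion} to include the system $\Efin$). This state serves as the pivot: it is already perfectly correct, so the remaining discrepancy from the ideal state is purely a secrecy issue.

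Then I would write
\begin{equation}
\left\lVert \rho^{\mathrm{real}}_{K_A K_B \Efin} - \rhoid_{K_A K_B \Efin}\right\rVert_{1} \leq \left\lVert \rho^{\mathrm{real}}_{K_A K_B \Efin} - \hat{\rho}_{K_A K_B \Efin}\right\rVert_{1} + \left\lVert \hat{\rho}_{K_A K_B \Efin} - \rhoid_{K_A K_B \Efin}\right\rVert_{1}.
\end{equation}
For the first term, note that $\hat{\rho}$ is obtained from $\rho^{\mathrm{real}}$ by applying a CPTP map (the ``copy Alice's key onto Bob's register'' channel) that acts only on $K_A K_B$; since the same map fixes $\hat{\rho}$ and $\rhoid$ (both already have $K_B$ a copy of $K_A$), the reduced-state formulation of correctness together with the data-processing inequality for $\lVert \cdot \rVert_1$ gives $\lVert \rho^{\mathrm{real}}_{K_A K_B \Efin} - \hat{\rho}_{K_A K_B \Efin}\rVert_1 = \lVert \rho^{\mathrm{real}}_{K_A K_B} - \hat{\rho}_{K_A K_B}\rVert_1 \leq \varepsilon_{\mathrm{cor}}$, where the equality holds because the correctness channel acts trivially on $\Efin$ and one checks the difference is supported on $K_A K_B$ alone. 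For the second term, observe that $\hat{\rho}_{K_A K_B \Efin}$ is the image of $\rho^{\mathrm{real}}_{K_A \Efin}$ under the isometric ``copy $K_A$ to $K_B$'' embedding, and likewise $\rhoid_{K_A K_B \Efin}$ is the image of $\rhoid_{K_A \Efin}$ under the same embedding; since an isometry preserves the trace norm, $\lVert \hat{\rho}_{K_A K_B \Efin} - \rhoid_{K_A K_B \Efin}\rVert_1 = \lVert \rho^{\mathrm{real}}_{K_A \Efin} - \rhoid_{K_A \Efin}\rVert_1 \leq \varepsilon_{\mathrm{sec}}$ by \cref{def: var length trace distance criterion}. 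Adding the two bounds yields the claim.

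The main subtlety — and the step I would be most careful about — is making the bookkeeping of the key-length registers consistent across the three states, since $K_A$ and $K_B$ are direct sums over all lengths and the abort/asymmetric-abort structure means the ``copy'' map is only partially defined. Concretely, one must verify that the correctness channel (which copies Alice's length-$l$ block onto Bob's register when $l_A = l, l_B \in \{0,l\}$, and does nothing on the $l_A = 0$ sector) genuinely maps $\rho^{\mathrm{real}}$ to $\hat{\rho}$ and fixes $\rhoid$, and that on the secrecy side the relevant embedding respects the block structure of $\mathcal{H}_{K_B} = \bigoplus_l \mathcal{H}_{K_B^l}$. Once the maps are pinned down on each length-sector, the argument is just data processing plus triangle inequality; the work is entirely in the setup. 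Since the excerpt explicitly defers this to \cite[Theorem~2]{Portmann_2022}, I would present the above as a proof sketch in the symmetric-abort, fixed-length case and remark that the general case follows the same pattern sector-by-sector.
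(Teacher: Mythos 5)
Your proof is correct and follows essentially the standard route: the paper itself does not prove this lemma but defers to~\cite[Theorem~2]{Portmann_2022}, whose argument is precisely your triangle inequality through the pivot state $\hat{\rho}_{K_AK_B\Efin}$ in which Bob's key is replaced by a copy of Alice's, with correctness bounding the first leg and secrecy (via data processing under the copy map) the second. One caveat on the first leg: data processing does not by itself give $\| \rho^{\mathrm{real}}_{K_AK_B\Efin}-\hat{\rho}_{K_AK_B\Efin}\|_1 \le \| \rho^{\mathrm{real}}_{K_AK_B}-\hat{\rho}_{K_AK_B}\|_1$ (tracing out $\Efin$ yields the opposite inequality, and the difference operator is \emph{not} supported on $K_AK_B$ alone); the bound instead follows from the classicality of the keys, since $\rho^{\mathrm{real}}$ and $\hat{\rho}$ agree on the event $K_A=K_B$, so $\| \rho^{\mathrm{real}}_{K_AK_B\Efin}-\hat{\rho}_{K_AK_B\Efin}\|_1 \le 2\,\mathrm{Pr}[K_A\neq K_B]$, which equals $\| \rho^{\mathrm{real}}_{K_AK_B}-\hat{\rho}_{K_AK_B}\|_1 \le \varepsilon_{\mathrm{cor}}$ by \cref{lem:correctness_alt_form}. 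Your restriction to the symmetric-abort case is appropriate, as that is the setting in which the lemma and \cref{def: trace distance criterion,def: var length trace distance criterion} are stated.
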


In practice, it is sometimes advantageous to treat correctness and secrecy as separate criteria. The reason is that the two types of errors can have very different consequences when the key is used in applications. For instance, consider one-time pad encryption of messages from Alice to Bob using the key pair $(K_{A},K_{B})$. If $K_A \neq K_B$, Bob may decrypt incorrectly. This happens with probability at most $\varepsilon_{\mathrm{cor}}$. Independently of this, the encrypted message may leak to Eve with probability at most $\varepsilon_{\mathrm{sec}}$. Since these two implications are quite different, one may assign different values to $\varepsilon_{\mathrm{cor}}$ and $\varepsilon_{\mathrm{sec}}$, rather than merging them into a single security parameter~$\varepsilon$.

\subsubsection{How to choose $\boldsymbol{\varepsilon}$?}

To determine an acceptable value for the security parameter~$\varepsilon$, one may draw on practices from safety-critical engineering---such as the design of bridges or nuclear reactors---where acceptable failure probabilities are chosen in light of the potential consequences. Concretely, a security breach can be associated with a potential cost $C_{\mathrm{failure}}$. For example, $C_{\mathrm{failure}}$ may be thought of as the insured amount, i.e., the loss one would want an insurance policy to cover in the event that an adversary successfully breaks the QKD scheme. Since the security parameter~$\varepsilon$ provides an upper bound on the probability of such a security failure, $\varepsilon$ likewise upper-bounds the probability that the cost $C_{\mathrm{failure}}$ is realised. 

The effective insurance cost scales approximately as $C_{\mathrm{insurance}} \approx \varepsilon \, C_{\mathrm{failure}}$: it becomes larger as $\varepsilon$ increases. Conversely, making $\varepsilon$ smaller incurs a resource cost $C_{\mathrm{resource}}$, for example through increased communication requirements. The choice of~$\varepsilon$ is therefore a trade-off between $C_{\mathrm{resource}}$ and $C_{\mathrm{insurance}}$. A natural heuristic is to select~$\varepsilon$ near the point where these two costs balance (see Fig.~8 of \cite{RennerWolf2023} for an illustration).

Since QKD keys are typically used within larger cryptographic systems, the choice of $\varepsilon$ should also be consistent with the failure probabilities of other components. Selecting $\varepsilon$ to be significantly smaller than those of the surrounding system offers little benefit and may impose unnecessary resource costs.

It is also essential to account for the repeated use of a QKD protocol. Note that the parameter~$\varepsilon$ bounds the probability of a security failure in a single execution of the protocol. If the protocol is run $n$ times to generate $n$ independent keys, the total failure probability is upper-bounded by $n\varepsilon$ (we will show this explicitly in~\cref{sec_composability} below). Consequently, to ensure that all $n$ keys are secure with high probability, $\varepsilon$ must be chosen so that $n\varepsilon$ remains below a desired threshold.

\subsubsection{Completeness}

\Cref{def:security} is sufficient to guarantee the security of the generated key. However, there is one aspect which it fails to capture: A QKD protocol which always aborts is trivially secure. Clearly, such a protocol, although secure, would be useless. One thus generally also requires that for certain (noisy) channels---ideally those that model the real-world communication link between Alice and Bob when Eve is passive---the protocol accepts with reasonably high probability and produces a key. This property is called the \emph{completeness} (or sometimes robustness) of a QKD protocol. 

Unlike the secrecy condition, the completeness of a protocol can be experimentally verified. Concretely, the acceptance probability can be estimated by repeatedly running the protocol on a given communication channel and recording the frequency of acceptance versus abortion. 

\subsection{Composability} \label{sec_composability}

One is rarely interested in the security of a QKD protocol in isolation. Rather, one wants to use it within an application that includes other cryptographic routines (e.g., one-time-pad encryption, as we have seen before, or other instances of QKD protocols). It is thus crucial to ensure that the security definitions remain valid in such scenarios.

This is addressed by the notion of \emph{composability}~\cite{Pfitzmann_2000, Pfitzmann_2001,Canetti_2001, Maurer_2011}. The idea is to allow for modularity, i.e., security is proved separately for each individual piece of a composite cryptographic system. Composability then corresponds to the requirement that the individual security statements imply security of the overall system.  \Cref{def:security} is compatible with this composability requirement. We illustrate this in the following with two concrete examples, corresponding to sequential and parallel composition of QKD protocols. As we will see, the failure probabilities $\varepsilon$ of the different components simply add up if they are used together. 

\subsubsection{Sequential composition}
Consider two QKD protocols that are run in sequence. We can model the first QKD protocol as a CPTP map $\mathcal{E}$ taking some initial state $\rho^{(0)}$ to some state $\rho^{(1,\mathrm{real})} = \mathcal{E}[\rho^{(0)}]$. Similarly, the second QKD protocol is again a CPTP map $\mathcal{F}$ mapping the state $\rho^{(1,\mathrm{real})}$ to some final state $\rho^{(2,\mathrm{real})} = \mathcal{F}[\rho^{(1,\mathrm{real})}]$. Let us denote by $\rho^{(1,\mathrm{ideal})}$ the ideal state of the first QKD protocol, i.e., $\rho^{(1,\mathrm{ideal})} = \mathcal{R}[\rho^{(1,\mathrm{real})}]$, where $\mathcal{R}$ is the key-replacer map from~\cref{eq:key_replacer_map}.
Similarly, define $\rho^{(2,\mathrm{ideal})} = \mathcal{R}\circ\mathcal{F}[\rho^{(1,\mathrm{ideal})}]$. Because the maps $\mathcal{R}$ only act on the output keys of the QKD protocols, we have $\rho^{(2,\mathrm{ideal})} = \mathcal{R}'[\rho^{(2,\mathrm{real})}]$, where $\mathcal{R}'$ is the map that replaces both keys.\footnote{This is not true if one wants to use the key of the first QKD protocol to authenticate the classical communication in the second QKD protocol. In this case, a more complicated definition of the ideal state $\rho^{(2,\mathrm{ideal})}$ is necessary. We do not cover this more elaborate form of composition here, but instead refer to the literature \cite{Portmann_2014_key_recycling, Maurer_2011}.} Suppose that the two QKD protocols have security parameter $\varepsilon_1$ and $\varepsilon_2$. The composed protocol then satisfies
\begin{equation}
\begin{aligned}
    \left\| \mathcal{F} \circ \mathcal{E}[\rho^{(0)}] - \rho^{(2,\mathrm{ideal})} \right\|_1 \leq& \left\| \mathcal{F}\circ\mathcal{E}[\rho^{(0)}] - \mathcal{F}[\rho^{(1,\mathrm{ideal})}] \right\|_1 + \left\| \mathcal{F}[\rho^{(1, \mathrm{ideal})}] - \rho^{(2,\mathrm{ideal})} \right\|_1 \\
    \leq& \left\| \mathcal{E}[\rho^{(0)}] - \rho^{(1,\mathrm{ideal})} \right\|_1 + \left\| \mathcal{F}[\rho^{(1,\mathrm{ideal})}] - \rho^{(2,\mathrm{ideal})} \right\|_1 \\
    \leq& \varepsilon_1 + \varepsilon_2,
\end{aligned}
\end{equation}
where the first line follows by the triangle inequality, the second line follows because the trace distance can only decrease under CPTP maps, and for the third line we used that the individual protocols described by $\mathcal{E}$ and $\mathcal{F}$ are $\varepsilon_{1}$ and $\varepsilon_{2}$-secure, respectively. 

This argument can also be understood at a more intuitive level: The only way the composed protocol $\mathcal{F} \circ \mathcal{E}$ can fail is if either the first protocol $\mathcal{E}$ fails or if the second protocol $\mathcal{F}$ fails.
This illustrates that the security parameter of the composed protocol is upper-bounded by the sum of the security parameters of its individual building blocks.

Iterating this argument shows that if an $\varepsilon$-secure QKD protocol is executed $n$ times, the overall security parameter scales as $n \varepsilon$. Consequently, to ensure that all $n$ keys are secure, the product $n \varepsilon$, rather than only $\varepsilon$, must be small. Fortunately, common QKD protocols can achieve sufficiently small values of $\varepsilon$ with only a modest increase in resource consumption. Thus, given an upper bound on $n$ (e.g., corresponding to the lifetime of the device), the system can be designed so that $n \varepsilon$ remains acceptably small.

\begin{figure}[t]
    \centering
    \begin{tikzpicture}[
        green_box/.style={
            thick,
            text=black,
            draw=mygreen!90,
            rounded corners,
            minimum height=0.7cm,
            minimum width=0.9cm,
            fill=mygreen!60,
        },
        blue_box/.style={
            thick,
            text=black,
            draw=myblue!70,
            rounded corners,
            minimum height=0.7cm,
            minimum width=0.9cm,
            fill=myblue!50,
        },
    ]
        \node[] (Alice1) at (-6.25, 0) {Alice 1};
        \node[] (Bob1) at (-3.25, 0) {Bob 1};
        \node[] (Eve) at (0, 0) {Eve};
        \node[] (Bob2) at (3.25, 0) {Bob 2};
        \node[] (Alice2) at (6.25, 0) {Alice 2};

        \node[thick,draw=myorange!70,fill=myorange!40,rounded corners=0.5cm,minimum height=1cm, minimum width=14cm] (rho) at (0, -0.8) {$\rho_{A_1 B_1 E B_2 A_2}$};

    
        \node[blue_box] (M1) at ([yshift=-1.9cm] Alice1.south) {$\mathcal{M}$};
        \draw[->,thick,>=stealth] (rho.south-|M1.north) -- (M1.north);
        
        \node[blue_box] (N1) at ([yshift=-1.9cm] Bob1.south) {$\mathcal{N}$};
        \draw[->,thick,>=stealth] (rho.south-|N1.north) -- (N1.north);
    
        \node[blue_box] (M2) at ([yshift=-1.9cm] Alice2.south) {$\mathcal{M}$};
        \draw[->,thick,>=stealth] (rho.south-|M2.north) -- (M2.north);
        
        \node[blue_box] (N2) at ([yshift=-1.9cm] Bob2.south) {$\mathcal{N}$};
        \draw[->,thick,>=stealth] (rho.south-|N2.north) -- (N2.north);
    
        \node[green_box,minimum width=4cm] (Eq) at ([yshift=-3.3cm] Eve) {$\mathcal{E}^{\mathrm{quant}}$};
        \draw[thick,->,>=stealth] (rho.south-|Eq.north) -- (Eq.north);
        \draw[thick,->,>=stealth] ([xshift=+0.2cm] M1.south) -- ([xshift=+0.2cm] M1.south|-Eq.west) -- (Eq.west);
        \draw[thick,->,>=stealth] ([xshift=-0.2cm] M2.south) -- ([xshift=-0.2cm] M2.south|-Eq.east) -- (Eq.east);
    
        \draw[->,thick,>=stealth] ([xshift=-0.5cm] Eq.north) -- ([xshift=-0.5cm] N1.east-|Eq.north) -- (N1.east);
        \draw[->,thick,>=stealth] ([xshift=+0.5cm] Eq.north) -- ([xshift=+0.5cm] N2.west-|Eq.north) -- (N2.west);
    
        \node[blue_box] (PA1) at ([yshift=-2cm] M1.south) {$\mathcal{P}^{\mathrm{Alice}}$};
        \draw[thick,->,>=stealth] (M1.south) -- (PA1.north);
        \node[green_box] (EcA1) at ([xshift=+1.5cm] PA1.center) {$\mathcal{E}^{\mathrm{auth}}$};
        \draw[->,thick,>=stealth] (PA1.east) -- (EcA1.west);
        \draw[->,thick,>=stealth] ([yshift=0.2cm] EcA1.east) -- ([xshift=-0.1cm,yshift=0.2cm] EcA1.east-|Eq.south);
        \node[blue_box] (PB1) at ([yshift=-3cm] N1.south) {$\mathcal{P}^{\mathrm{Bob}}$};
        \draw[->,thick,>=stealth] (EcA1.east) -- ([xshift=-0.2cm] EcA1.east-|PB1.north) -- ([xshift=-0.2cm] PB1.north);
        \draw[->,thick,>=stealth] (N1.south) -- (PB1.north);
        \node[green_box] (EcB1) at ([xshift=-1.5cm] PB1.center) {$\mathcal{E}^{\mathrm{auth}}$};
        \draw[thick,->,>=stealth] (PB1.west) -- (EcB1.east);
        \draw[thick,->,>=stealth] (EcB1.west) -- (EcB1.west-|PA1.south);    
        \draw[thick,->,>=stealth] (EcB1.south) -- ([yshift=-0.5cm] EcB1.south) -- ([xshift=-0.1cm,yshift=-0.5cm] EcB1.south-|Eq.center);
    
        \node[blue_box] (PA2) at ([yshift=-2cm] M2.south) {$\mathcal{P}^{\mathrm{Alice}}$};
        \draw[thick,->,>=stealth] (M2.south) -- (PA2.north);
        \node[green_box] (EcA2) at ([xshift=-1.5cm] PA2.center) {$\mathcal{E}^{\mathrm{auth}}$};
        \draw[->,thick,>=stealth] (PA2.west) -- (EcA2.east);
        \draw[->,thick,>=stealth] ([yshift=0.2cm] EcA2.west) -- ([xshift=+0.1cm,yshift=0.2cm] EcA2.west-|Eq.south);
        \node[blue_box] (PB2) at ([yshift=-3cm] N2.south) {$\mathcal{P}^{\mathrm{Bob}}$};
        \draw[->,thick,>=stealth] (EcA2.west) -- ([xshift=+0.2cm] EcA2.west-|PB2.north) -- ([xshift=+0.2cm] PB2.north);
        \draw[->,thick,>=stealth] (N2.south) -- (PB2.north);
        \node[green_box] (EcB2) at ([xshift=+1.5cm] PB2.center) {$\mathcal{E}^{\mathrm{auth}}$};
        \draw[thick,->,>=stealth] (PB2.east) -- (EcB2.west);
        \draw[thick,->,>=stealth] (EcB2.east) -- (EcB2.east-|PA2.south);
        \draw[thick,->,>=stealth] (EcB2.south) -- ([yshift=-0.5cm] EcB2.south) -- ([xshift=+0.1cm,yshift=-0.5cm] EcB2.south-|Eq.center);
    
        \node[blue_box] (KA1) at ([yshift=-2.5cm] PA1.south) {$\mathcal{K}^{\mathrm{Alice}}$};
        \draw[thick,->,>=stealth] (PA1.south) -- (KA1.north);
        \node (KeyA1) at ([yshift=-0.8cm] KA1.south) {$K_{A_{1}}$};
        \draw[thick,->,>=stealth] (KA1.south) -- (KeyA1.north);
        \node[blue_box] (KB1) at (KA1.center-|PB1.south) {$\mathcal{K}^{\mathrm{Bob}}$};
        \draw[thick,->,>=stealth] (PB1.south) -- (KB1.north);
        \node (KeyB1) at ([yshift=-0.8cm] KB1.south) {$K_{B_{1}}$};
        \draw[thick,->,>=stealth] (KB1.south) -- (KeyB1.north);
    
        \node[blue_box] (KA2) at ([yshift=-2.5cm] PA2.south) {$\mathcal{K}^{\mathrm{Alice}}$};
        \draw[thick,->,>=stealth] (PA2.south) -- (KA2.north);
        \node (KeyA2) at ([yshift=-0.8cm] KA2.south) {$K_{A_{2}}$};
        \draw[thick,->,>=stealth] (KA2.south) -- (KeyA2.north);
        \node[blue_box] (KB2) at (KA2.center-|PB2.south) {$\mathcal{K}^{\mathrm{Bob}}$};
        \draw[thick,->,>=stealth] (PB2.south) -- (KB2.north);
        \node (KeyB2) at ([yshift=-0.8cm] KB2.south) {$K_{B_{2}}$};
        \draw[thick,->,>=stealth] (KB2.south) -- (KeyB2.north);
    
        \node (Ehat) at (Eq.center|-KeyA1.center) {$\Efin$};
        \draw[thick,->,>=stealth] (Eq.south) -- (Ehat.north);

        \begin{scope}[on background layer]
            \draw[draw=lightgray!80!black,fill=lightgray!40,rounded corners]
                ([xshift=-0.3cm, yshift=+0.2cm] Alice1.north west) --
                ([xshift=+0.3cm, yshift=+0.2cm] Bob1.north east) --
                ([xshift=+0.3cm, yshift=-0.1cm] Bob1.north east|-KeyB1.south east) --
                ([xshift=-0.3cm, yshift=-0.1cm] Alice1.north west|-KeyA1.south west) --
                cycle;
            \draw[draw=lightgray!80!black,fill=lightgray!40,rounded corners]
                ([xshift=-0.3cm, yshift=+0.2cm] Bob2.north west) --
                ([xshift=+0.3cm, yshift=+0.2cm] Alice2.north east) --
                ([xshift=+0.3cm, yshift=-0.1cm] Alice2.north east|-KeyA2.south east) --
                ([xshift=-0.3cm, yshift=-0.1cm] Bob2.north west|-KeyB2.south west) --
                cycle;
        \end{scope}
    \end{tikzpicture}
    \caption{\textbf{Parallel composition of two QKD protocol executions.} Each of the gray boxes corresponds to the execution of a QKD protocol as in~\cref{fig_QKD_protocol}. By the assumptions of the protocol, the operations of the honest parties (blue boxes) factorize between the two executions of the protocol, whereas Eve's actions (green boxes) can introduce correlations between the two protocols.
    }
    \label{fig:parallel_composition}
\end{figure}
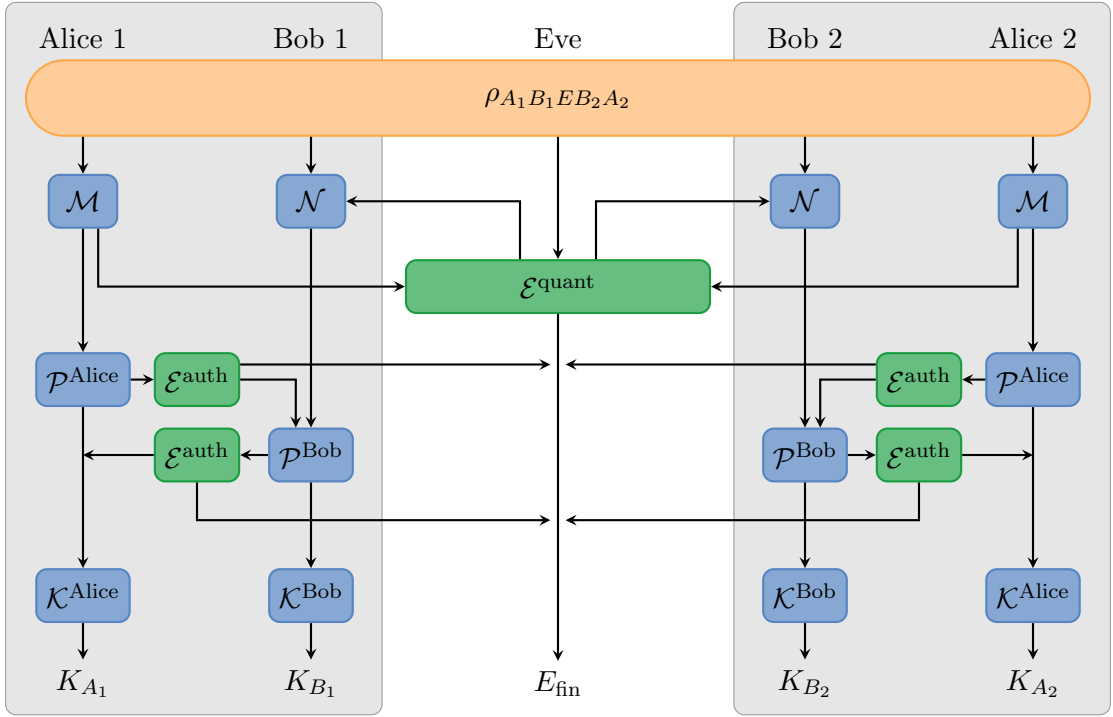

\subsubsection{Parallel composition}
Similarly, one can consider the parallel execution of two instances of a given QKD protocol. This situation is sketched in Fig.~\ref{fig:parallel_composition}. Let $\Omega_{1}$ denote the event of the first protocol accepting and $\Omega_{2}$ denote the event of the second protocol accepting. Then the event of both protocols accepting may be written as $\Omega \coloneqq \Omega_{1} \cap \Omega_{2}$. Given that the protocols are each $\varepsilon$-secret, we have for the final key $K_{A_{1}}K_{A_{2}}$ that
\begin{equation}
\begin{aligned}
    \mathrm{Pr}[\Omega] &\trdist{\rho_{K_{A_1}K_{A_2}\Efin | \Omega}}{\tau_{K_{A_1} K_{A_2}} \otimes \rho_{\Efin | \Omega}}\\
    &\hspace{10pt}= \trdist{\rho_{K_{A_1}K_{A_2}\Efin \land \Omega}}{\tau_{K_{A_1}} \otimes \tau_{K_{A_2}} \otimes \rho_{\Efin \land \Omega}} \\
    &\hspace{10pt}\leq\trdist{\rho_{K_{A_{1}}K_{A_{2}}\Efin \land \Omega}}{\tau_{K_{A_1}} \otimes \rho_{K_{A_2} \Efin \land \Omega}}\\
    &\hspace{25pt}+\trdist{\tau_{K_{A_1}}\otimes \rho_{K_{A_2} \Efin \land \Omega}}{\tau_{K_{A_1}} \otimes \tau_{K_{A_2}} \otimes \rho_{\Efin \land \Omega}}\\
    &\hspace{10pt}\leq \trdist{\rho_{K_{A_{1}}K_{A_{2}}\Efin \land \Omega_{1}}}{\tau_{K_{A_1}} \otimes \rho_{K_{A_2} \Efin \land \Omega_{1}}} \\
    &\hspace{25pt}+\trdist{\tau_{K_{A_1}} \otimes \rho_{K_{A_2} \Efin \land \Omega_{2}}}{\tau_{K_{A_1}} \otimes \tau_{K_{A_2}} \otimes \rho_{\Efin \land \Omega_{2}}} \\
    &\hspace{10pt}\leq \varepsilon + \trdist{\rho_{K_{A_2} \Efin \land \Omega_{2}}}{\tau_{K_{A_2}} \otimes \rho_{\Efin \land \Omega_{2}}} \\
    &\hspace{10pt}\leq 2\varepsilon,
\end{aligned}
\end{equation}
where we first used the triangle inequality. The second inequality follows because the trace distance can only decrease under completely positive and trace non-increasing maps. For the third inequality, we used that the second QKD protocol can be seen as part of Eve's attack on the first QKD protocol. Finally, we used security of the second QKD protocol. The total key $K_{A_1} K_{A_2}$ is thus $2 \varepsilon$-secret. An analogous bound holds for correctness. Hence, in parallel composition, just like in sequential composition, we can simply add the individual errors to derive the security of the composite protocol.

\subsubsection{Probability of events} \label{sec:eventsincrease}

Another important consequence of the $\varepsilon$-security criterion is that the likelihood of any event cannot change by more than~$\varepsilon$ if one replaces a hypothetical perfect key generation device with an actual QKD protocol. To make this more precise, let $\rho^\mathrm{real}$ be the state of the ``real world'' where an actual QKD protocol has been executed. Furthermore, let $\rho^\mathrm{ideal}$ be the state of an ``ideal world'', which is identical to the real world, except that the key generated by the QKD protocol is replaced by a perfectly secure key. 

Now consider using the generated key for some cryptographic application, such as online banking. As any physical process, this transforms the state of the world according to a CPTP map $\mathcal{E}$. Suppose that there is a particular undesirable event---say, an adversary gaining unauthorized access to a bank account---which we denote by $\Omega$, and let $\Lambda$ be the POVM element corresponding to this event.
Then, the probabilities of $\Omega$ in the real and in the ideal world are related by
\begin{equation}
\begin{aligned}
    \mathrm{Pr}_{\rho^{\mathrm{real}}}[\Omega] =& \tr[\Lambda \mathcal{E}[\rho^\mathrm{real}]] \\
    =& \tr[\Lambda \mathcal{E}[\rho^\mathrm{ideal}]] + \tr[\Lambda \mathcal{E}[\rho^\mathrm{real} - \rho^\mathrm{ideal}]] \\
    \leq& \tr[\Lambda \mathcal{E}[\rho^\mathrm{ideal}]] + \frac{1}{2} \| \rho^\mathrm{real} - \rho^\mathrm{ideal}\|_1 \\
    \leq& \mathrm{Pr}_{\rho^\mathrm{ideal}}[\Omega] + \perr^{\max}(\rho^{\mathrm{ideal}}, \rho^{\mathrm{real}}),
\end{aligned}
\end{equation}
where we used the variational characterization of the trace distance, 
\begin{equation}
    \frac{1}{2} \| \rho - \sigma \|_1 = \max_{0 \leq \Lambda \leq \mathds{1}} \tr[\Lambda (\rho - \sigma)],
\end{equation} 
the fact that the trace distance can only decrease under CPTP maps, and \cref{lem:perr_lower_bound}.

This shows that using the key within the real world, described by $\rho^{\mathrm{real}}$, can increase the probability of the adverse event $\Omega$ by at most $\perr^{\max}(\rho^\mathrm{real}, \rho^\mathrm{ideal}) \leq \varepsilon$  compared to what would happen in the ideal world, described by $\rho^{\mathrm{ideal}}$. The security parameter $\varepsilon$ thus serves as a worst-case bound on the probability that imperfection arising from using a QKD-generated key---rather than a perfect key---can affect an application. This is compatible with the interpretation of~$\varepsilon$ as the maximum probability that the situations described by $\rho^\mathrm{real}$ and $\rho^\mathrm{ideal}$ are distinct.

\subsubsection{Composability frameworks}

Sequential and parallel composition, discussed above, are special cases of the broader principle of composability. To capture the composition of cryptographic protocols in general scenarios, several competing theoretical frameworks have been developed \cite{Pfitzmann_2000,Pfitzmann_2001,Canetti_2001,Maurer_2011,Maurer_2012,Broadbent_2023}. Some of these address classical cryptographic security \cite{Pfitzmann_2000,Pfitzmann_2001,Canetti_2001}, while others extend to quantum-mechanical protocols \cite{Maurer_2011,Maurer_2012}, and even to frameworks that encompass post-quantum or more general physical theories \cite{Portmann_2017}. Because the notion of $\varepsilon$-security discussed in this work rests on a set of well-motivated axioms, we expect it to be compatible with any reasonable framework for composability.

The framework of Abstract Cryptography (AC) \cite{Maurer_2011, Portmann_2022} is particularly well suited to encompass the axiomatic approach taken here. In AC, cryptographic systems are modelled in terms of abstract resources and converters, where converters transform one resource into another. Security is then defined by comparing a real resource with its ideal counterpart, with the difference measured by a (pseudo-)metric $\delta^{\mathrm{AC}}$. By choosing $\delta^{\mathrm{AC}} = \perr^{\max}$, the axioms of \cref{sec:trace-distance} directly imply the requirements of AC on the metric. In this sense, the definitions of~\cref{sec:securitydefinition} are compatible with the AC framework. 

\section{Critiques of the security criterion} \label{sec_critiques}

The $\varepsilon$-security criterion in QKD, \cref{def:security}, has been the subject of criticism, which has particularly focused on its reliance on the trace distance. Part of this criticism is directed at the so-called distinguishing paradigm \cite{Pfitzmann_2000,Pfitzmann_2001,Canetti_2001,Maurer_2011}, which is often cited as the main justification for using trace distance. Before addressing these concerns, we briefly review the distinguishing paradigm to set the stage.

The distinguishing paradigm envisions a hypothetical \emph{distinguishing game}: a distinguisher is given a black box that, depending on the value of a uniformly random bit~$B$, either executes the real protocol (e.g., for QKD) or simulates the corresponding idealised resource (such as a perfect key)~\cite{Pfitzmann_2000, Pfitzmann_2001, Canetti_2001, Maurer_2011}. The distinguisher’s goal is to output a guess~$\hat{B}$ for~$B$, based on interactions with the black box. One then defines the \emph{distinguishing advantage} as $2 {(\mathrm{Pr}[B=\hat{B}] - \frac{1}{2})}$, maximised over all possible strategies of the distinguisher. Note that ${\mathrm{Pr}[B=\hat{B}] = \frac{1}{2}}$ corresponds to a completely uninformed guess. Furthermore, it is a well-established result that, if the black box behaviour for $B=0$ and $B=1$ is described by quantum states $\rho^{\mathrm{real}}$ and $\rho^{\mathrm{ideal}}$, respectively, then the distinguishing advantage equals the trace distance $\frac{1}{2} \| \rho^{\mathrm{real}}- \rho^{\mathrm{ideal}}\|_1$ between these two states~\cite{Helstrom_1996}. Hence, upper-bounding this trace distance directly limits an adversary's possibility to see any difference between the (real) situation described by $\rho^{\mathrm{real}}$ and the (ideal) situation described by $\rho^{\mathrm{ideal}}$. 

\begin{enumerate}
    \item \textbf{Critique:}  In \cite{yuen_2016}, it has been argued that the bit $B$ should not be assumed to be uniformly distributed. The rationale was that one is always interacting with the real protocol, corresponding to a deterministic bit~$B$. More generally,~\cite{yuen_2016} raises the concern that the definition of the distinguishing advantage is specific to the distinguishing game and cannot be used to infer indistinguishability in other contexts.
    
    \textbf{Response:} If valid, such criticism would apply equally to classical cryptography, where the distinguishing-game argument is standard and widely regarded as unproblematic (see the references above). Moreover, there exist independent justifications for using the trace distance to define $\varepsilon$-security,  notably the one presented here, based on $\perr^{\max}$, which follows directly from general and well-justified axiomatic principles (discussed in \cref{sec:security-definition}).

    \item \textbf{Critique:} In \cite{yuen_2016}, it is proposed that one should bound the difference between two distributions (or quantum states in the case of QKD) for all arguments of the distribution individually. Concretely, for a key~$K$ of length $l$, this would mean that the probability $\mathrm{Pr}(K=k|E)$, from Eve's point of view, that $K$ takes any concrete value~$k$, should satisfy
    \begin{equation}
        \label{eq:yuen_criterion}
        \left|\mathrm{Pr}(K=k|E)  - 2^{-l} \right|\leq \varepsilon 2^{-l}
    \end{equation}
    for all $k\in\{0,1\}^l$.
    The $\varepsilon$-criterion does not imply~\cref{eq:yuen_criterion} as one can show with explicit examples~\cite{Hirota_2012}. Therefore, it is argued, the $\varepsilon$-criterion is not sufficient to ensure security.

    \textbf{Response:}  
    The criterion~\cref{eq:yuen_criterion} is \textit{sufficient} for security, in the sense that if~\cref{eq:yuen_criterion} is satisfied then the trace distance, and thus also $\perr$, is certainly bounded from above by $\varepsilon$. 
    However, one can easily see that it is a too pessimistic bound, and hence \emph{not necessary} to claim security. For instance, a distribution such that $\mathrm{Pr}(K=k|E) = (1-\varepsilon)2^{-l} + \varepsilon \delta_{k,0}$ is by construction equal to a uniform distribution, except with probability $\varepsilon$, but it does not satisfy~\cref{eq:yuen_criterion}. Consequently, being a too strong requirement, there's no need that the $\varepsilon$-security criterion implies \cref{eq:yuen_criterion}.

    \item \textbf{Critique:} Some authors attempt to justify the previous point by arguing that a single-number measure cannot be used to compare two distributions (density operators in the case of QKD), as it would only provide a bound on the mean value of the distribution~\cite{yuen_2016}.

    \textbf{Response:} Even if one accepted that mean values are inherently problematic, the argument is not conclusive. In particular, the claim that single-number measures can only bound mean values is not generally correct. For instance, the criterion proposed in~\cite{yuen_2016}, i.e., the bound~\cref{eq:yuen_criterion}, can be formulated as a single-number measure by multiplying~\cref{eq:yuen_criterion} with $2^l$ and taking the maximum of the left-hand side over all possible $k \in \{0,1\}^l$. Note, however, that the resulting measure violates the operational bound of Axiom~\ref{itm:operational_bound}, as one can verify using the example in the previous response.

    More generally, one may wonder whether there could be alternative, more stringent, security criteria than the $\varepsilon$-security criterion. These could be defined analogously to~\eqref{eq:security_perr}, i.e., as $\delta_{\mathrm{alternative}}(\rho^{\mathrm{real}}, \rho^{\mathrm{ideal}}) \leq \varepsilon$, where $\delta_{\mathrm{alternative}}$ is an alternative distance measure that is more sensitive than $\smash{\perr^{\max}}$. However, any such alternative measure would necessarily violate at least one of the Axioms~\ref{itm:triangle_ineq}--\ref{itm:operational_bound}. This is because $\perr$ was defined as the maximum (and, hence, most sensitive) measure satisfying these axioms.
        
    \item \textbf{Critique:} In practical implementations one usually has $\varepsilon \gg 2^{-l}$, as keys lengths are typically large (say, $l > 10^5$, whereas $\varepsilon \approx 10^{-10}$). This means that the parameters $\varepsilon$ and $2^{-l}$ differ significantly in their order of magnitude.  Therefore, even if the key~$K$ is $\varepsilon$-secure, the probability that $K$ takes a particular value~$k$ can still be relatively large, i.e., $\mathrm{Pr}(K=k) \gg 2^{-l}$~\cite{Yuen_2012}.
    
    \textbf{Response:} This consideration, although correct, is unproblematic. The reason is that, even though $\mathrm{Pr}(K=k) \gg 2^{-l}$, for an $\varepsilon$-secret key we still have $\mathrm{Pr}(K=k) \leq 2^{-l} + \varepsilon \ll 1$, i.e., the probability that Eve guesses correctly remains upper-bounded by (roughly)~$\varepsilon$ (because $2^{-l} \ll \varepsilon$). Note also that this is in perfect agreement with our considerations in \cref{sec:eventsincrease}.

\item \textbf{Critique:} In~\cite{abidin_2011} it was argued that when QKD is executed sequentially $n$ times, where the hash function used for authentication in each round is recycled in the next, the overall security parameter~$\varepsilon$ grows exponentially in~$n$. On this basis, the authors concluded that security degrades rapidly.

\textbf{Response:} The argument provided in~\cite{abidin_2011} merely yields an upper bound on the security parameter~$\varepsilon$ of the composed scheme. A refined treatment~\cite{Portmann_2014_key_recycling} demonstrates that $\varepsilon$ in fact grows only linearly in~$n$, consistent with our composability discussion in \cref{sec_composability}.

\item \textbf{Critique:} In~\cite{Bernstein2018} it is argued that the laws of physics, on which the security of QKD is based, also entail unavoidable interactions (such as through electromagnetic or gravitational fields) that make shielding between systems impossible. From this perspective, the laws of physics themselves would appear to preclude the very possibility of information-theoretic secrecy as captured by the $\varepsilon$-secrecy criterion.

 \textbf{Response:} This argument would indeed be justified within the framework of classical physics, where perfect shielding of fields is fundamentally impossible. However, the situation changes once these fields are treated as quantised. In fact, as argued in~\cite{RenesRenner2020}, if the laws of physics truly prohibited shielding, they would necessarily lead to unavoidable decoherence of any quantum information---thereby contradicting the foundational principle of quantum error correction. However, experiments in quantum information processing support the expectation that quantum error correction can suppress errors to arbitrarily small levels, i.e., below any desired threshold $\varepsilon > 0$.
\end{enumerate}

QKD is also often criticised for being unpractical and costly, in particular in comparison with post-quantum cryptography. A detailed discussion of these critiques can be found in \cite{NSAresponse}.

\section{Conclusion}

The secrecy of a key generated by QKD cannot be verified by direct experiments. The reason for this is that the legitimate parties, Alice and Bob, have no access to a potential adversary, Eve, so they cannot simply test whether she possesses any information about the key. Instead, secrecy must be inferred indirectly from observable quantities during the protocol---for example, the rate of observed bit errors. Drawing reliable conclusions from such observations is highly non-trivial and constitutes the central challenge of QKD security proofs.

At a high level, the structure of a QKD security proof is as follows: Under well-defined assumptions about Eve’s limitations (e.g., not having access to Alice and Bob’s laboratories), it is shown that the probability of her gaining significant information about the key is negligible (see~\cite{TTNKN2025} for a recent concrete example of such a proof). Making this precise requires two key ingredients: a formal model of the adversary's capabilities and a quantitative notion of secrecy.

The first ingredient, addressed in \cref{sec:assumptions}, involves specifying the resources used in a QKD protocol (such as trusted devices and authenticated classical communication) and stating the assumptions about Eve. This modelling step is where the greatest divergence between theory and implementation typically arises. If the assumptions made in the theoretical model are not satisfied in practice, the corresponding security proof does not apply. A well-known example is the photon-number splitting attack \cite{Huttner_1995}, which exploited a mismatch between the theoretical assumption of single-photon states (used to encode individual qubits) and the experimental use of weak coherent pulses. Identifying and bridging such gaps between theory and experiment has been the focus of significant effort in recent years and remains essential for achieving practically secure QKD~\cite{BSI_2023}. 

The second ingredient is the quantitative definition of security. In \cref{sec:security-definition}, we took an axiomatic approach to measure the probability that the situations described by two quantum states $\rho^{\mathrm{real}}$ and $\rho^{\mathrm{ideal}}$ are distinct, and we found that this probability is tightly upper-bounded by the trace distance between $\rho^{\mathrm{real}}$ and $\rho^{\mathrm{ideal}}$. Security of a QKD protocol is then defined by taking $\rho^{\mathrm{real}}$ to characterise the situation when running the protocol and $\rho^{\mathrm{ideal}}$ to be the situation with a perfectly secure key, and bounding their distance by a security parameter~$\varepsilon$. 
Hence, operationally, $\varepsilon$ can be interpreted as the maximum probability that the key fails to be secure. This interpretation also provides a meaningful basis for selecting practical security parameters and balancing security against other design constraints.

The $\varepsilon$-security criterion outlined in this document has, since its introduction, become the de facto standard in the literature on QKD security~\cite{BHLMO2005,Renner2005,Konig_2007,etsi_white_paper_2018,Pirandola_2020,Portmann_2022}. Over the years, however, it has also been the subject of various criticisms, which we revisited in \cref{sec_critiques}. As discussed there, they do not challenge the validity of the standard security definition but have been valuable in highlighting potential misunderstandings, which we addressed in our responses.

\section*{Acknowledgments}
This article was written as part of the Qu-Gov project, which was commissioned by the German Federal Ministry of Finance. We thank the Bundesdruckerei --- Innovation Leadership and Team for their support and encouragement. All authors acknowledge support from the ETH Zurich Quantum Center. R.W.\ acknowledges support from the Ministry of Culture and Science of North Rhine-Westphalia via the NRW-Rückkehrprogramm. 

The writing of this document greatly benefited from interactions with the wider community, which includes numerous insights, discussions, and feedback, over many years. However, any errors or omissions are entirely our own responsibility. Special thanks go to Holger Eble, Lars Kamin, Norbert L\"utkenhaus, Shlok Nahar, Ernest Y.-Z.\ Tan, and Devashish Tupkary for discussions and/or feedback on preliminary versions of this document. 

\bibliographystyle{mybibstyle}
\newcommand{\etalchar}[1]{$^{#1}$}

\section*{Appendices}
\appendix
\crefalias{section}{appendix}

\section{Proofs of statements from \cref{sec:security-definition}}

Here we provide proofs of the various statements made in \cref{sec:security-definition}. We start with \cref{lem:perr_max_existence}.

\perrmaxexistence*
\begin{proof} 
    Let $\perr: \mathcal{S}(\mathcal{H}) \times \mathcal{S}(\mathcal{H}) \mapsto \mathbb{R}_{\geq 0}$ be any function satisfying Axioms~\ref{itm:triangle_ineq}--\ref{itm:operational_bound}. Using them, we immediately find
    \begin{equation} 
        \perr(\rho, \sigma) \leq \inf \sum_{i=0}^{n-1} \perr(\omega_{i}, \omega_{i+1}) \leq \inf \sum_{i=0}^{n-1} \hat{\delta}(\omega_{i}, \omega_{i+1}) = \perr^{\max}(\rho, \sigma),
    \end{equation}
    where the infimum runs over all sets of states $\{\omega_{i}\}_{i=0}^{n}$ such that $\omega_{0} = \rho$ and $\omega_{n} = \sigma$. We will call such a set of states a \emph{path from $\rho$ to $\sigma$}.

    The above shows that $\perr^{\max}$, as constructed by~\eqref{eq:perrconstruction}, upper bounds any function $\perr$ that satisfies Axioms~\ref{itm:triangle_ineq}--\ref{itm:operational_bound}. To conclude that $\perr^{\max}$ meets \cref{def:perr_max}, it thus remains to prove that \eqref{eq:perrconstruction} also satisfies these axioms. (Axiom~\ref{itm:operational_bound} will also immediately imply that $\perr^{\max}$ is upper-bounded by~$1$, as required by its definition.) The triangle-inequality (\ref{itm:triangle_ineq}) follows because for any state $\omega$, a path from $\rho$ to $\omega$ and a path from $\omega$ to $\sigma$ can be concatenated to give a path from $\rho$ to $\sigma$ and hence $\perr^{\max}(\rho, \sigma) \leq \perr^{\max}(\rho, \omega) + \perr^{\max}(\omega, \sigma)$.
    Symmetry (\ref{itm:symmetry}) follows because any path from $\rho$ to $\sigma$ also defines a path from $\sigma$ to $\rho$ and because $\smash{\hat{\delta}}$ is symmetric. The operational bound (\ref{itm:operational_bound}) follows because $\{\rho, \sigma\}$ is a valid path and because $\hat{\delta} \leq \tilde{\delta}$. 
\end{proof}

We continue with an auxiliary lemma, which will be useful later on.
\label{sec:proofs}
\begin{appendixlemma} \label{lem: trace distance implies intuition}
    Given any two states $\rho, \sigma\in\mathcal{S}(\mathcal{H})$ satisfying $\frac{1}{2}\lVert \rho - \sigma \rVert_{1} = \varepsilon$, there exist states $\rho', \sigma'\in\mathcal{D}(\mathcal{H})$, such that
    \begin{equation}
        (1 - \varepsilon') \rho + \varepsilon' \rho' = (1 - \varepsilon') \sigma + \varepsilon' \sigma',
    \end{equation}
    where $\varepsilon' = \frac{\varepsilon}{1 + \varepsilon} \leq \varepsilon$.
\end{appendixlemma}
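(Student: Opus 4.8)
The plan is to build an explicit common mixture from the Jordan--Hahn decomposition of $\rho-\sigma$. Write $\rho-\sigma = P - Q$, where $P,Q\geq 0$ are the positive and negative parts, so that $P$ and $Q$ have orthogonal support. Since $\tr(\rho-\sigma)=0$, we get $\tr P = \tr Q$, and the definition of the trace norm gives $\tr P = \tr Q = \tfrac{1}{2}\lVert\rho-\sigma\rVert_1 = \varepsilon$. I would first dispose of the trivial case $\varepsilon = 0$, where $\rho=\sigma$ and $\varepsilon'=0$, and the claimed identity holds for any choice of states $\rho',\sigma'$; so from now on assume $\varepsilon>0$, and note $0\leq\varepsilon'=\varepsilon/(1+\varepsilon)<1$ automatically since $\varepsilon\in(0,1]$.

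Next, I would propose the common mixture
\[
  \omega := (1-\varepsilon')\,(\sigma + P)
\]
and check that it is a valid density operator: positivity is immediate from $\sigma,P\geq 0$, and $\tr\omega = (1-\varepsilon')(1+\varepsilon) = 1$, where the last equality is precisely the identity that defines $\varepsilon'$. This normalization computation is in fact what pins down the value $\varepsilon'=\varepsilon/(1+\varepsilon)$: once one insists that the common mixture be of the form $(1-\varepsilon')(\sigma+P)$, there is no remaining freedom.

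Finally, I would set
\[
  \sigma' := \tfrac{1}{\varepsilon'}\bigl(\omega - (1-\varepsilon')\sigma\bigr), \qquad \rho' := \tfrac{1}{\varepsilon'}\bigl(\omega - (1-\varepsilon')\rho\bigr),
\]
so that by construction $(1-\varepsilon')\sigma + \varepsilon'\sigma' = \omega = (1-\varepsilon')\rho + \varepsilon'\rho'$, which is exactly the desired equality. It then only remains to verify that $\rho'$ and $\sigma'$ are genuine states. For $\sigma'$ one computes $\omega - (1-\varepsilon')\sigma = (1-\varepsilon')P\geq 0$, hence $\sigma' = \tfrac{1-\varepsilon'}{\varepsilon'}P = \tfrac{1}{\varepsilon}P$, which is positive semi-definite with $\tr\sigma' = 1$. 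For $\rho'$ one uses $\sigma + P - \rho = P - (\rho-\sigma) = P-(P-Q)=Q$, so $\omega - (1-\varepsilon')\rho = (1-\varepsilon')Q\geq 0$, hence $\rho' = \tfrac{1}{\varepsilon}Q$, again positive semi-definite with $\tr\rho'=1$.

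There is no genuine obstacle in this argument: the only idea required is the choice of the common mixture $\omega$, and once it is written as $(1-\varepsilon')(\sigma+P)$ every remaining step is forced. The only points demanding a little care are the degenerate endpoint $\varepsilon=0$ (handled separately above) and checking that the convex coefficients make sense, i.e.\ $0\leq\varepsilon'<1$, which holds automatically for $\varepsilon\in[0,1]$.
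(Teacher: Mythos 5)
Your proposal is correct and follows essentially the same route as the paper: both rest on the Jordan--Hahn decomposition of $\rho-\sigma$ into its positive and negative parts $P$ and $Q$, and both end up with exactly the same states $\sigma'=P/\varepsilon$, $\rho'=Q/\varepsilon$, with the identity obtained either by constructing the common mixture $\omega=(1-\varepsilon')(\sigma+P)$ (your phrasing) or by rearranging $\rho+\varepsilon\rho'=\sigma+\varepsilon\sigma'$ and dividing by $1+\varepsilon$ (the paper's). Your separate treatment of the degenerate case $\varepsilon=0$, where the division by $\varepsilon$ is undefined, is a small point of added care that the paper's proof leaves implicit.
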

\begin{proof}
By the definition of the Schatten 1-norm, we have
\begin{equation}
\label{eq:def epsbar}
   \varepsilon = \frac{1}{2} \lVert \rho - \sigma \rVert_{1} = \frac{1}{2} \Tr\left(\left|\rho-\sigma\right|\right) = \frac{1}{2} \Tr\left(\left(\rho-\sigma\right)_{+}-\left(\rho-\sigma\right)_{-}\right) 
\end{equation}
where $\left(\rho-\sigma\right)_{+}$ is the positive part of the Hermitian operator $\left(\rho-\sigma\right)$ and $\left(\rho-\sigma\right)_{-}$ is its negative part. We have
\begin{equation}
\label{eq: pos and neg parts}
    \rho-\sigma = \left(\rho-\sigma\right)_{+}+\left(\rho-\sigma\right)_{-},
\end{equation} 
thus $0 = \Tr\left(\rho-\sigma\right)= \Tr\left(\left(\rho-\sigma\right)_{+}\right) + \Tr\left(\left(\rho-\sigma\right)_{-}\right)$, which, together with~\eqref{eq:def epsbar}, implies
    $\varepsilon = \Tr\left(\left(\rho-\sigma\right)_{+}\right) = -  \Tr\left(\left(\rho-\sigma\right)_{-}\right)$.

With this in mind, let $\rho'=-\frac{\left(\rho-\sigma\right)_{-}}{\varepsilon}$, and $\sigma'=\frac{\left(\rho-\sigma\right)_{+}}{\varepsilon}$ be the normalised states appearing in~\cref{eq:def epsbar}. Then
\begin{equation}
\begin{aligned}
    \rho - \sigma = (\rho - \sigma)_{+} + (\rho - \sigma)_{-} \iff& \rho - (\rho - \sigma)_{-} = \sigma + (\rho - \sigma)_{+} \\
    \iff& \rho + \varepsilon \rho' = \sigma + \varepsilon \sigma'.
\end{aligned}
\end{equation}
Dividing both sides by $1 + \varepsilon$ yields the desired statement.
\end{proof}
This suffices to show \cref{lem:perr_upper} from \cref{sec:trace-distance}.

\perrupperbound*

\begin{proof}
    Let $\varepsilon \coloneqq \frac{1}{2}\| \rho - \sigma \|_{1}$. We begin by applying the triangle-inequality (Axiom~\ref{itm:triangle_ineq}), which asserts that for any $\omega \in \mathcal{S}(\mathcal{H})$ we have that
    \begin{equation}
        \perr(\rho, \sigma) \leq \perr(\rho, \omega) + \perr(\omega, \sigma).
    \end{equation}
    By \cref{lem: trace distance implies intuition}, we can find a state $\omega$ that can be decomposed in the following two ways
    \begin{align}
        \omega =& (1 - \varepsilon') \rho + \varepsilon' \rho' \label{eq:decomp1} \\
        =& (1 - \varepsilon') \sigma + \varepsilon' \sigma' \label{eq:decomp2},
    \end{align}
    where $\varepsilon' \leq \varepsilon$, and $\rho', \sigma' \in \mathcal{S}(\mathcal{H})$. Note that by the convexity of $\mathcal{S}(\mathcal{H})$, $\omega$ is a valid quantum state.
    Using symmetry (Axiom~\ref{itm:symmetry}) as well as Axiom~\ref{itm:operational_bound} with the decomposition~\cref{eq:decomp1} yields
    \begin{equation}
        \perr(\rho, \omega) = \perr(\omega, \rho) \leq \varepsilon' \leq \varepsilon.
    \end{equation}
    Similarly, using Axiom~\ref{itm:operational_bound} with the decomposition~\eqref{eq:decomp2} gives
    \begin{equation}
        \perr(\omega, \sigma) \leq \varepsilon' \leq \varepsilon.
    \end{equation}
    Combining these statements leads to
    \begin{equation}
        \perr(\rho, \sigma) \leq 2 \varepsilon = \| \rho - \sigma \|_{1}
    \end{equation}
    as desired.
\end{proof}

To prepare for the proof of \cref{lem:perr_lower_bound} from \cref{sec:trace-distance}, we state and prove some elementary properties of the function $\tilde{\delta} : \mathcal{S}(\mathcal{H}) \times \mathcal{S}(\mathcal{H}) \mapsto \mathbb{R}_{\geq 0}$ defined within \cref{lem:perr_max_existence}.

\begin{appendixlemma} \label{lem:predist_lower_bound}
    For any two states $\rho, \sigma \in \mathcal{S}$,
    \begin{equation}
        \tilde{\delta}(\rho, \sigma) \geq \frac{1}{2} \| \rho - \sigma \|_{1}
    \end{equation}
\end{appendixlemma}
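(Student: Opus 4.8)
The plan is to unwind the definition of $\tilde{\delta}$ and reduce the claim to the elementary fact that any two normalised states lie at Schatten-$1$ distance at most $2$. Recall that $\tilde{\delta}(\rho,\sigma)$ is the infimum of those $\varepsilon\in[0,1]$ for which there exists $\sigma'\in\mathcal{S}(\mathcal{H})$ with $\rho=(1-\varepsilon)\sigma+\varepsilon\sigma'$. First I would note that this feasible set is non-empty — taking $\varepsilon=1$ and $\sigma'=\rho$ always works — so $\tilde{\delta}$ is well defined and the infimum is meaningful.

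Next I would fix an arbitrary feasible pair $(\varepsilon,\sigma')$ and rearrange the defining identity to $\rho-\sigma=\varepsilon(\sigma'-\sigma)$. Taking the Schatten $1$-norm of both sides and using the triangle inequality together with $\|\sigma'\|_1=\|\sigma\|_1=1$ gives
\[
  \|\rho-\sigma\|_1=\varepsilon\,\|\sigma'-\sigma\|_1\leq\varepsilon\bigl(\|\sigma'\|_1+\|\sigma\|_1\bigr)=2\varepsilon .
\]
Hence every feasible $\varepsilon$ satisfies $\varepsilon\geq\tfrac12\|\rho-\sigma\|_1$, and taking the infimum over all feasible $\varepsilon$ yields $\tilde{\delta}(\rho,\sigma)\geq\tfrac12\|\rho-\sigma\|_1$, which is the claim.

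There is essentially no obstacle here; the only subtlety worth a sentence is the non-emptiness of the feasible set. I would then record that this lemma feeds directly into the proof of \cref{lem:perr_lower_bound}: since $\hat{\delta}(\omega_i,\omega_{i+1})=\min\{\tilde{\delta}(\omega_i,\omega_{i+1}),\tilde{\delta}(\omega_{i+1},\omega_i)\}\geq\tfrac12\|\omega_i-\omega_{i+1}\|_1$ and the Schatten $1$-norm satisfies its own triangle inequality, summing over any path $\omega_0=\rho,\dots,\omega_n=\sigma$ gives $\sum_i\hat{\delta}(\omega_i,\omega_{i+1})\geq\tfrac12\|\rho-\sigma\|_1$, and therefore $\perr^{\max}(\rho,\sigma)\geq\tfrac12\|\rho-\sigma\|_1$ after taking the infimum in \eqref{eq:perrconstruction}.
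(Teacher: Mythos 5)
Your proof is correct and follows essentially the same route as the paper's: rearrange the defining decomposition to $\rho-\sigma=\varepsilon(\sigma'-\sigma)$, apply the triangle inequality for the Schatten $1$-norm with $\|\sigma\|_1=\|\sigma'\|_1=1$ to get $\|\rho-\sigma\|_1\leq 2\varepsilon$, and pass to the infimum over feasible $\varepsilon$. The only cosmetic difference is that the paper phrases the limiting step via "for every $\varepsilon'>\tilde{\delta}(\rho,\sigma)$ there is a feasible pair" rather than bounding all feasible $\varepsilon$ directly; both are equivalent.
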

\begin{proof}
    Let $\varepsilon \coloneqq \tilde{\delta}(\rho, \sigma)$. By the definition of the infimum, for any $\varepsilon' > \varepsilon$ there exists a density operator $\sigma'$ such that
    \begin{equation}
        \rho = (1 - \varepsilon') \sigma + \varepsilon' \sigma',
    \end{equation}
    which we rewrite as
    \begin{equation}
        \rho - \sigma = \varepsilon' \sigma' - \varepsilon' \sigma.
    \end{equation}
    By the triangle inequality for the Schatten 1-norm, we then have that
    \begin{equation}
        \frac{1}{2} \| \rho - \sigma \|_{1} = \frac{1}{2} \| \varepsilon' \sigma' - \varepsilon' \sigma \|_{1} \leq \frac{1}{2} \varepsilon' \|\sigma\|_{1} + \frac{1}{2}\varepsilon' \|\sigma'\|_{1} = \varepsilon',
    \end{equation}
    where we used that $\|\sigma\|_{1} = \|\sigma'\|_{1} = 1$. Since the statement holds for any $\varepsilon' > \varepsilon$, we get the desired inequality.
\end{proof}

\begin{appendixlemma} \label{lem:predist_data_processing}
    Let $\rho, \sigma \in \mathcal{S}$ be states and let $\mathcal{E}$ be a CPTP map taking inputs from $\mathcal{S}$. Then
    \begin{equation}
        \tilde{\delta}(\rho, \sigma) \geq \tilde{\delta}(\mathcal{E}[\rho], \mathcal{E}[\sigma]).
    \end{equation}
\end{appendixlemma}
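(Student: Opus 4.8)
The plan is to exploit linearity of $\mathcal{E}$ together with the fact that CPTP maps send states to states, so that any decomposition witnessing a feasible value for $\tilde\delta(\rho,\sigma)$ is pushed forward to a decomposition witnessing the same value for $\tilde\delta(\mathcal{E}[\rho],\mathcal{E}[\sigma])$.

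Concretely, I would first fix an arbitrary feasible $\varepsilon\in[0,1]$ for $\tilde\delta(\rho,\sigma)$, i.e.\ one for which there exists $\sigma'\in\mathcal{S}(\mathcal{H})$ with $\rho = (1-\varepsilon)\sigma + \varepsilon\sigma'$. Applying $\mathcal{E}$ to both sides and using that $\mathcal{E}$ is linear gives
\begin{equation}
    \mathcal{E}[\rho] = (1-\varepsilon)\,\mathcal{E}[\sigma] + \varepsilon\,\mathcal{E}[\sigma'].
\end{equation}
Since $\mathcal{E}$ is completely positive and trace preserving, $\mathcal{E}[\sigma']\in\mathcal{S}(\mathcal{H})$ is again a normalised density operator. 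Hence the pair $(\varepsilon, \mathcal{E}[\sigma'])$ is admissible in the optimisation defining $\tilde\delta(\mathcal{E}[\rho],\mathcal{E}[\sigma])$, which yields $\tilde\delta(\mathcal{E}[\rho],\mathcal{E}[\sigma]) \leq \varepsilon$.

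Finally, since this holds for \emph{every} $\varepsilon$ feasible for $\tilde\delta(\rho,\sigma)$, I take the infimum over all such $\varepsilon$ to conclude $\tilde\delta(\mathcal{E}[\rho],\mathcal{E}[\sigma]) \leq \tilde\delta(\rho,\sigma)$. (Working with feasible values rather than a minimiser avoids any issue about whether the infimum is attained; alternatively one passes to a sequence $\varepsilon_n \downarrow \tilde\delta(\rho,\sigma)$ with witnesses $\sigma_n'$ and takes the limit.) There is no real obstacle here — the only thing to be slightly careful about is invoking the trace-preserving and positivity properties of $\mathcal{E}$ precisely where they are needed, namely to guarantee $\mathcal{E}[\sigma']$ is a legitimate state; complete positivity is what makes the statement robust under tensoring with an identity channel, as in the earlier notational convention for $\mathcal{E}_{B|A}[\rho_{AR}]$.
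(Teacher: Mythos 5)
Your proposal is correct and follows essentially the same route as the paper: push a feasible decomposition $\rho=(1-\varepsilon)\sigma+\varepsilon\sigma'$ through $\mathcal{E}$ by linearity, note that $\mathcal{E}[\sigma']$ is again a state, and conclude by taking the infimum. The only cosmetic difference is that the paper quantifies over $\varepsilon'>\tilde{\delta}(\rho,\sigma)$ while you quantify over all feasible $\varepsilon$; both handle the non-attainment of the infimum correctly.
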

\begin{proof}
    Let $\varepsilon \coloneqq \tilde{\delta}(\rho, \sigma)$. By the definition of the infimum, for any $\varepsilon' > \varepsilon$ there exists a density operator $\sigma'$ such that
    \begin{equation}
        \rho = (1 - \varepsilon') \sigma + \varepsilon' \sigma'.
    \end{equation}
    By linearity, this immediately implies that
    \begin{equation}
        \mathcal{E}[\rho] = (1 - \varepsilon') \mathcal{E}[\sigma] + \varepsilon' \mathcal{E}[\sigma'].
    \end{equation}
    Hence $(\varepsilon', \mathcal{E}[\sigma'])$ is feasible for $\tilde{\delta}(\mathcal{E}[\rho], \mathcal{E}[\sigma])$, which in turn means that $\tilde{\delta}(\mathcal{E}[\rho], \mathcal{E}[\sigma]) \leq \varepsilon'$. Since this holds for any $\varepsilon' > \varepsilon$, we have that $\tilde{\delta}(\mathcal{E}[\rho], \mathcal{E}[\sigma]) \leq \varepsilon$ as desired.
\end{proof}

We are now ready to prove \cref{lem:perr_lower_bound} from \cref{sec:trace-distance}.

\perrlowerbound*
\begin{proof}
  Let $\hat{\delta}(\rho, \sigma)$ be defined as in \cref{lem:perr_max_existence}. \Cref{lem:predist_lower_bound} implies that $\hat{\delta}(\rho, \sigma) \geq \frac{1}{2} \| \rho-\sigma \|_1$, because the trace distance is symmetric. Using this property and the expression for $\perr^{\max}$ from \cref{lem:perr_max_existence}, we find
    \begin{equation} \label{eq:trace_dist_lower_bound}
        \perr^{\max}(\rho, \sigma) = \inf \sum_{i=0}^{n-1} \hat{\delta}(\omega_{i}, \omega_{i+1}) \geq \inf \sum_{i=0}^{n-1} \frac{1}{2} \left\| \omega_{i+1} - \omega_{i} \right\|_{1} \geq \frac{1}{2} \| \rho -  \sigma \|_{1},
    \end{equation}
    where the second inequality is the triangle inequality of the trace distance. 

    It thus remains to prove Properties~\ref{itm:positive_definite} and~\ref{itm:data_processing}. Positive-definiteness (\ref{itm:positive_definite}) follows immediately from \cref{eq:trace_dist_lower_bound} since $\perr^{\max}(\rho, \sigma) = 0$ implies $\|\rho - \sigma\|_1 = 0$ and hence $\rho = \sigma$. It is also clear that $\perr^{\max}(\rho, \rho) = 0$.
    Finally, the data processing inequality (\ref{itm:data_processing}) holds because $\{\mathcal{E}[\omega_{i}]\}_{i}$ is a valid path from $\mathcal{E}[\rho]$ to $\mathcal{E}[\sigma]$ (as defined within the proof of \cref{lem:perr_max_existence}) and because $\smash{\hat{\delta}}$ satisfies the data-processing inequality, which can be seen from \cref{lem:predist_data_processing}.
\end{proof}

Finally, we prove a lemma that immediately implies the statement made in the remark on $\varepsilon$-correctness just below \cref{def: trace distance criterion}.

\begin{appendixlemma} \label{lem:correctness_alt_form}
    Let $\rho_{K_A K_B}^{\mathrm{real}}$ be the state of Alice and Bob's keys. Furthermore, define $\hat{\rho}_{K_A K_B}$ to be the same as $\rho_{K_A K_B}^\mathrm{real}$ but with Bob's key replaced by Alice's key. More precisely, for
    \begin{equation}
        \rho_{K_A K_B}^{\mathrm{real}} = \sum_{k_A, k_B} P_{K_A K_B}(k_A, k_B) \ketbra{k_A}_{K_A} \otimes \ketbra{k_B}_{K_B}
    \end{equation}
    define
    \begin{equation}
    \begin{aligned}
        \hat{\rho}_{K_A K_B} =& \sum_{k_A,k_B} P_{K_A K_B}(k_A, k_B) \ketbra{k_A}_{K_A} \otimes \ketbra{k_A}_{K_B} \\
        =& \sum_{k_A, k_B} P_{K_A}(k_A) \delta_{k_A, k_B} \ketbra{k_A}_{K_A} \otimes \ketbra{k_B}_{K_B}.
    \end{aligned}
    \end{equation}
    Then
    \begin{equation}
        \frac{1}{2} \norm{\rho^{\mathrm{real}}_{K_A K_B} - \hat{\rho}_{K_A K_B}}_1 = \mathrm{Pr}[K_A \neq K_B].
    \end{equation}
\end{appendixlemma}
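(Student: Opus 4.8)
The plan is to exploit the fact that both $\rho^{\mathrm{real}}_{K_A K_B}$ and $\hat{\rho}_{K_A K_B}$ are classical states, i.e., simultaneously diagonal in the product computational basis $\{\ket{k_A}_{K_A}\otimes\ket{k_B}_{K_B}\}$. Their difference is then diagonal too, and the Schatten $1$-norm of a Hermitian diagonal operator is simply the sum of the absolute values of its diagonal entries. Hence $\frac{1}{2}\norm{\rho^{\mathrm{real}}_{K_A K_B} - \hat{\rho}_{K_A K_B}}_1 = \frac{1}{2}\sum_{k_A,k_B} \abs{P_{K_A K_B}(k_A,k_B) - P_{K_A}(k_A)\,\delta_{k_A,k_B}}$, which is just the classical total variation distance between the joint distribution $P_{K_A K_B}$ and the distribution supported on the diagonal with weights given by the marginal $P_{K_A}$.

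First I would split the sum over $(k_A,k_B)$ into the off-diagonal part ($k_A\neq k_B$) and the diagonal part ($k_A=k_B$). On the off-diagonal the second term vanishes, so this piece contributes $\frac{1}{2}\sum_{k_A\neq k_B} P_{K_A K_B}(k_A,k_B) = \frac{1}{2}\mathrm{Pr}[K_A\neq K_B]$. On the diagonal, using $P_{K_A}(k) = \sum_{k_B} P_{K_A K_B}(k,k_B)$, the $k$-th entry is $P_{K_A K_B}(k,k) - P_{K_A}(k) = -\sum_{k_B\neq k} P_{K_A K_B}(k,k_B)$, which is non-positive, so its absolute value equals $\sum_{k_B\neq k} P_{K_A K_B}(k,k_B)$; summing over $k$ gives another $\frac{1}{2}\mathrm{Pr}[K_A\neq K_B]$. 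Adding the two contributions yields $\mathrm{Pr}[K_A\neq K_B]$, as claimed.

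I do not expect a genuine obstacle here: the argument is a short bookkeeping computation with a single sign observation on the diagonal entries. The only point worth stating carefully — given that the ambient framework of the paper is quantum — is the very first reduction, namely that for these commuting (indeed jointly diagonal) operators the trace distance coincides with the $\ell_1$-distance of their eigenvalue sequences and hence with classical total variation distance. This is standard, but I would spell it out explicitly; everything after that is elementary.
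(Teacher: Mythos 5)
Your proposal is correct and follows essentially the same route as the paper's proof: reduce to the classical $\ell_1$-distance of the diagonal entries, split into the off-diagonal part (contributing $\tfrac{1}{2}\mathrm{Pr}[K_A \neq K_B]$) and the diagonal part, and use the sign observation $P_{K_A}(k) \geq P_{K_A K_B}(k,k)$ to evaluate the latter as another $\tfrac{1}{2}\mathrm{Pr}[K_A \neq K_B]$. The only cosmetic difference is that you express the diagonal deficit directly as $\sum_{k_B \neq k} P_{K_A K_B}(k,k_B)$ while the paper sums it to $1 - \mathrm{Pr}[K_A = K_B]$; these are identical computations.
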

\begin{proof}
    We calculate
    \begin{equation}
    \begin{aligned}
        &\frac{1}{2} \norm{\rho^{\mathrm{real}}_{K_A K_B} - \hat{\rho}_{K_A K_B}}_1 \\
        =& \frac{1}{2} \sum_{k_A, k_B} \abs{P_{K_A K_B}(k_A, k_B) - P_{K_A}(k_A) \delta_{k_A, k_B}} \\
        =& \frac{1}{2} \sum_{k_A \neq k_B} P_{K_A K_B}(k_A, k_B) + \frac{1}{2} \sum_{k_A = k_B} \abs{P_{K_A K_B}(k_A, k_B) - P_{K_A}(k_A) \delta_{k_A, k_B}}.
    \end{aligned}
    \end{equation}
    For the second term, we compute
    \begin{equation}
    \begin{aligned}
        \sum_{k_A = k_B} \abs{P_{K_A K_B}(k_A, k_B) - P_{K_A}(k_A) \delta_{k_A, k_B}}
        =& \sum_{k_A = k_B} \abs{P_{K_A K_B}(k_A, k_B) - P_{K_A}(k_A)} \\
        =& \sum_{k_A = k_B} P_{K_A}(k_A) - P_{K_A K_B}(k_A, k_B) \\
        =& 1 - \mathrm{Pr}[K_A = K_B] \\
        =& \mathrm{Pr}[K_A \neq K_B],
    \end{aligned}
    \end{equation}
    where, for the second equality, we used that $P_{K_A}(k_A) \geq P_{K_A K_B}(k_A, k_A)$. Putting everything together, we have
    \begin{equation}
        \frac{1}{2} \norm{\rho^{\mathrm{real}}_{K_A K_B} - \hat{\rho}_{K_A K_B}}_1 = \frac{1}{2} \mathrm{Pr}[K_A \neq K_B] + \frac{1}{2} \mathrm{Pr}[K_A \neq K_B] = \mathrm{Pr}[K_A \neq K_B],
    \end{equation}
    as claimed.
\end{proof}

\end{document}